\theoremstyle{plain}
\newtheorem{Theorem}{Theorem}
\theoremstyle{plain}
\newtheorem{Proposition}{Proposition}
\theoremstyle{plain}
\definecolor{mycolor}{rgb}{0.122, 0.435, 0.698}
\newmdenv[innerlinewidth=0.5pt, roundcorner=4pt, linecolor=mycolor,
innerleftmargin=6pt, innerrightmargin=6pt,
innertopmargin=6pt, innerbottommargin=6pt]{mybox}
\newcommand{\bmQ}{\mathbf{Q}}
\newcommand{\bmk}{\mathbf{k}}
\newcommand{\bmv}{\mathbf{v}}  
  \newcommand{\bmW}{\mathbf{W}}
  \newcommand{\bmr}{\mathbf{r}}
\newcounter{revc}
\newcommand{\revi}[2]{%
	\zref@setcurrent{revsec}{\thesection}%
	\zref@setcurrent{revpage}{\thepage}%
	\zref@setcurrent{revcontent}{#2}%
	\refstepcounter{revc}%
	\label{#1}%
	\zlabel{#1}%
	\textcolor{blue}{#2}%
}
\newcommand{\revinu}[2]{%
	\zref@setcurrent{revsec}{\thesection}%
	\zref@setcurrent{revcontent}{#2}%
	\refstepcounter{revc}%
	\zlabel{#1}%
	\label{#1}%
	#2
}
\newcommand{\revr}[2]{%
	\zref@setcurrent{revsec}{\thesection}%
	\zref@setcurrent{revcontent}{#2}%
	\refstepcounter{revc}%
	\zlabel{#1}%
	\label{#1} \sout{#2}
}
\def\expandafter\quote\expandafter{\quote\onehalfspacing\fontsize{12}{14}\selectfont}
\def\BibTeX{{\rm B\kern-.05em{\sc i\kern-.025em b}\kern-.08em
		T\kern-.1667em\lower.7ex\hbox{E}\kern-.125emX}}
\begin{document}

\title{Secrecy Rate Maximization for 6G Reconfigurable Holographic Surfaces Assisted Systems}  
\author{Chandan Kumar Sheemar,~\IEEEmembership{Member,~IEEE}, Wali Ullah Khan,~\IEEEmembership{Member,~IEEE}, Marco di Renzo,~\IEEEmembership{Fellow,~IEEE}, \\Asad Mahmood,~\IEEEmembership{Member,~IEEE}, Jorge Querol,~\IEEEmembership{Member,~IEEE}, and Symeon Chatzinotas,~\IEEEmembership{Fellow,~IEEE}  
\thanks{Chandan Kumar Sheemar, Wali Ullah Khan, Asad Mahmood and Symeon Chatzinotas are with the SnT department at the University of Luxembourg (email:\{chandankumar.sheemar, waliullah.khan, asad.mahmood, jorge.querol, symeon.chatzinotas\}@uni.lu).  }
\thanks{M. Di Renzo is with Universit\'e Paris-Saclay, CNRS, CentraleSup\'elec, Laboratoire des Signaux et Syst\`emes, 3 Rue Joliot-Curie, 91192 Gif-sur-Yvette, France. (marco.di-renzo@universite-paris-saclay.fr), and with King's College London, Centre for Telecommunications Research -- Department of Engineering, WC2R 2LS London, United Kingdom (marco.di\_renzo@kcl.ac.uk)}
\thanks{The work of M. Di Renzo was supported in part by the European Union through the Horizon Europe project COVER under grant agreement number 101086228, the Horizon Europe project UNITE under grant agreement number 101129618, the Horizon Europe project INSTINCT under grant agreement number 101139161, and the Horizon Europe project TWIN6G under grant agreement number 101182794, as well as by the Agence Nationale de la Recherche (ANR) through the France 2030 project ANR-PEPR Networks of the Future under grant agreement NF-YACARI 22-PEFT-0005, and by the CHIST-ERA project PASSIONATE under grant agreements CHIST-ERA-22-WAI-04 and ANR-23-CHR4-0003-01.} 
}

\maketitle

\maketitle

\begin{abstract}
 Reconfigurable holographic surfaces (RHS) have emerged as a transformative material technology, enabling dynamic control of electromagnetic waves to generate versatile holographic beam patterns. This paper addresses the problem of secrecy rate maximization for an RHS-assisted systems by joint designing the digital beamforming, artificial noise (AN), and the analog holographic beamforming. However, such a problem results to be non-convex and challenging. Therefore, to solve it, a novel alternating optimization algorithm based on the majorization-maximization (MM) framework for RHS-assisted systems is proposed, which rely on surrogate functions to facilitate efficient and reliable optimization. In the proposed approach, digital beamforming design ensures directed signal power toward the legitimate user while minimizing leakage to the unintended receiver. The AN generation method projects noise into the null space of the legitimate user’s channel, aligning it with the unintended receiver’s channel to degrade its signal quality. Finally, the holographic beamforming weights are optimized to refine the wavefronts for enhanced secrecy rate performance
 Simulation results validate the effectiveness of the proposed framework, demonstrating significant improvements in secrecy rate compared to the benchmark method.
\end{abstract}
\begin{IEEEkeywords}
Reconfigurable holographic surfaces, holographic beamforming, secrecy rate maximization, hybrid beamforming, majorization-minimization
\end{IEEEkeywords}

\IEEEpeerreviewmaketitle
 
\section{Introduction} \label{Intro}

\IEEEPARstart{T}{he} advent of sixth-generation (6G) wireless communications heralds a transformative era characterized by unprecedented data rates, ultra-reliable low-latency communications (URLLC), and massive connectivity for Internet-of-Everything (IoE) applications \cite{10054381,dang2020should}. Beyond conventional connectivity, 6G envisions supporting intelligent environments with integrated sensing, computation, and communication \cite{10812728,sheemar2023full}. This vision necessitates advanced physical-layer technologies to meet the growing demands for spectral efficiency, energy sustainability, and security \cite{10634051,10584518}. As the proliferation of wireless applications in critical sectors such as healthcare, finance, and national security continues to rise, ensuring the confidentiality and integrity of transmitted information becomes a cornerstone of 6G system design \cite{9524814}.

With the emergence of new applications envisioned in 6G, wireless systems are expected to handle increasingly sensitive data while operating in more dynamic and complex environments \cite{10044183,9040264,khan2024reconfigurable}. Furthermore, such applications bring highly heterogeneous requirements in terms of security \cite{10608156}. While some applications prioritize data confidentiality, others may require resilience against service disruption or protection against device spoofing \cite{zaman2023comprehensive}. The openness of the wireless medium further complicates these demands, as signals are inherently vulnerable to interception and interference \cite{10024837}. In such a landscape, ensuring robust security is no longer a secondary consideration but a core necessity for the successful deployment of 6G systems \cite{10268440}. It demands innovative approaches that go beyond traditional cryptographic methods, addressing vulnerabilities at both the physical layer and higher network layers, and providing adaptive, scalable, and energy-efficient solutions that can meet the diverse needs of 6G applications \cite{xu2023beyond,10676973,suomalainen2025cybersecurity}.

Recently, reconfigurable holographic surfaces (RHS) have emerged as a groundbreaking paradigm in wireless communications, offering unprecedented capabilities to enhance the performance and security of wireless systems \cite{9136592,10841801}. An RHS is an array of programmable elements that can dynamically manipulate electromagnetic waves, enabling fine-grained control over the wireless propagation environment \cite{10232975}. By leveraging advanced materials and holographic principles, RHS can steer, focus, and scatter signals with high precision, adapting to varying channel conditions, user locations, and interference patterns \cite{9110848}. Unlike traditional antenna arrays, RHS is highly energy-efficient, operating with minimal power consumption while delivering exceptional control over wavefronts \cite{10746336}.

From a structural standpoint, an RHS consists of an array of sub-wavelength elements \cite{10163760}. These surfaces are typically constructed using low-cost metasurfaces integrated with embedded controllers \cite{10158690}. By applying carefully designed holographic principles, RHS can synthesize complex electromagnetic fields, enabling advanced functionalities such as beamforming, wavefront shaping, and interference suppression \cite{9826717}. This adaptability allows RHS to create custom propagation environments that enhance spectral efficiency and mitigate interference, making them an essential component for 6G communication systems \cite{10263988}. Their ability to optimize the wireless environment in real-time positions RHS as a versatile and cost-effective technology for improving network capacity and reliability.

The potential of RHS to revolutionize physical-layer security is particularly compelling. By dynamically steering signal beams away from potential eavesdroppers and creating nulls in their directions, RHS can significantly degrade the signal-to-noise ratio (SNR) of unauthorized receivers due to the potential of generating pencil-like beams \cite{10003076}. Unlike traditional phased-array systems, RHS-assisted transceivers can rely on passive or minimally powered programmable metasurfaces, which eliminate the need for expensive radio-frequency (RF) components such as phase shifters and other bulky mechanical components \cite{iacovelli2024holographic}. With RHS, a new paradigm of holographic beamforming arises which enables the optimization with only amplitudes instead of phases modulation as in traditional phased array antennas, leading to a cost-effective and energy-efficient solution \cite{deng2021reconfigurable}. This reduction in complexity and reliance on cost-effective materials makes RHS a more economical and scalable solution for large-scale deployments to enable secure communications.

\subsection{State-of-the-Art}
The exploration of RHS for advanced wireless communication systems has led to significant innovations in beamforming, capacity optimization, and energy efficiency. A range of recent studies have examined various aspects of RHS design and application, each offering unique insight into their capabilities and limitations.

One of the foundational studies, \cite{an2023tutorial}, provides a detailed examination of the spatial degrees of freedom (DoF) and ergodic capacity of point-to-point RHS setups. In \cite{ruiz2023degrees}, the authors investigate the DoF, which dictate the number of orthogonal communication modes that can be established for line-of-sight (LoS) holographic communications. In \cite{ruiz2023low}, low complexity designs for reconfigurable intelligent surfaces (RIS)-aided holographic communication systems are presented, in which
the configurations of the holographic transmitter and the RIS are given in a closed-form. In the context of hybrid beamforming, \cite{deng2021reconfigurable} addresses the sum-rate maximization problem by optimizing both digital and analog beamforming components in RHS-assisted systems. The authors propose an iterative method that divides the non-convex optimization into convex subproblems, enabling efficient solutions. While \cite{deng2021reconfigurable} emphasizes hybrid beamforming, \cite{hu2023holographic} presents an alternative approach focused solely on fully analog beamforming. In this study, the authors propose broadcasting a single data stream to all users to maximize the sum-rate, eliminating the need for costly RF chains. However, this method sacrifices flexibility in data multiplexing and interference management, highlighting trade-offs between cost-efficiency and system adaptability. Energy efficiency is another critical area of exploration. The work in \cite{you2022energy} tackles energy efficiency maximization in RHS systems with analog beamforming, addressing both perfect and imperfect channel state information (CSI). This study underlines the practical importance of robust designs for real-world applications, demonstrating solutions that balance efficiency and resilience against CSI inaccuracies. In \cite{sheemar2025holographic_JCAS}, the potential of holographic joint communication and sensing under the Cramer-Rao bounds is investigated. In \cite{sheemar2025joint}, a novel joint holographic beamforming and 3D location optimization for holographic unmanned aerial vehicle (UAV) communications is proposed. In \cite{sheemar2025joint}, the problem of joint holographic beamforming and user scheduling under individual quality-of-service constraints is investigated.

Dynamic metasurface antennas (DMAs) represent another key development in RHS technology. In \cite{shlezinger2019dynamic,shlezinger2021dynamic}, the authors analyze DMAs as a metasurface-based implementation framework for holographic massive MIMO (HMIMO) systems. These works delve into the operational principles of DMAs, discussing their potential in 6G communications and the challenges posed by their broad application scope. The findings in these studies illustrate the advantages of DMAs in achieving dynamic, programmable wireless environments. Further advancing the theoretical foundation of RHS systems, \cite{zhu2024electromagnetic} presents electromagnetic (EM) information theory applied to HMIMO system design. The authors outline an interdisciplinary framework encompassing modeling, analytical methods, and practical applications, setting the stage for innovative EM-domain system designs. Complementing these studies, \cite{wan2021terahertz} explores the application of HMIMO systems for terahertz (THz) communications, identifying the potential of RHS in enabling ultra-high-frequency operations.

Near-field applications of RHS technology also show promise in near-field communications, as explored in \cite{ji2023extra,gan2022near,gong2024holographic,wei2023tri}. In \cite{ji2023extra}, the focus is on spatially constrained antenna apertures with rectangular symmetry, leveraging evanescent waves to enhance spatial degrees of freedom and channel capacity in near-field scenarios. The use of Fourier plane-wave series expansion enables efficient transmission improvements, particularly in confined spatial domains. The integration of RHS into millimeter-wave systems is discussed in \cite{gan2022near}, where the authors investigate advanced beamforming and localization techniques. By optimizing beam patterns and reducing localization errors, the study demonstrates the dual benefits of enhanced communication and positioning accuracy. Similarly, \cite{gong2024holographic} examines generalized EM-domain near-field channel models for point-to-point RHS setups in line-of-sight (LoS) environments. The authors propose two channel models—one emphasizing precision and the other computational simplicity—and derive tight capacity bounds, offering insights into the trade-offs between accuracy and complexity. A unique perspective is introduced in \cite{wei2023tri}, where triple polarization (TP) is utilized for multi-user communication systems with RHS. The study highlights the potential of TP in enhancing system capacity and diversity without increasing the antenna array size. By employing a dyadic Green's function-based TP near-field channel model, the authors propose effective precoding schemes to mitigate cross-polarization and inter-user interference, showcasing the adaptability of RHS for diverse use cases.

The potential of RHS for enhancing wireless communication security has been recently explored in \cite{de2024sum,xu2024reconfigurable}. In \cite{de2024sum}, the authors address the problem of secrecy rate maximization for RHS-assisted transceivers equipped with a single RF chain. They focus on analyzing the sum-secrecy rate and propose a max-min power allocation scheme to optimize system performance under security constraints. However, this work does not incorporate RHS. In \cite{xu2024reconfigurable}, the authors address the problem of secrecy rate maximization for a holographic transceiver. However, the holographic beamforming is designed using a traversal-based method, which results in sub-optimal system performance.

\subsection{Motivation and Main Contributions}
 The exploration of security in RHS-assisted systems remains in its infancy, presenting significant opportunities for further research. While recent works such as \cite{de2024sum} and \cite{xu2024reconfigurable} have highlighted the potential of RHS in enhancing secrecy performance, the scope of these studies remains limited. These gaps underscore the need for innovative approaches that leverage the full potential of RHS technology, such as advanced optimization algorithms, robust designs for practical constraints, and mechanisms to address dynamic adversarial environments. The promising adaptability, energy efficiency, and spatial precision of RHS make it a compelling area for developing new strategies to secure next-generation wireless networks.

 In this work, we propose a comprehensive framework for secrecy rate maximization in hybrid RHS-assisted systems by jointly optimizing the digital beamformers, artificial noise (AN), and analog holographic beamforming. Unlike prior studies in general, we develop a novel alternating optimization design based on the majorization-minimization (MM) framework, a powerful method for handling non-convex problems, which has not been explored in the context of RHS. This approach enables us to systematically address the intricate challenges associated with holographic beamforming, ensuring efficient and reliable solutions.

The MM-based framework operates by iteratively constructing surrogate functions that upper-bound the original non-convex objective function. These surrogate functions are carefully designed to approximate the behaviour of the true objective while being computationally simpler and convex, allowing for efficient optimization at each iteration. By leveraging the surrogate functions, we first design the digital beamforming to direct signal power toward the legitimate user while minimizing leakage to the unintended receiver by solving convex surrogate problems. Subsequently, the AN generation method is designed based on the effective channel response, projecting the noise into the null space of the legitimate user’s channel while aligning it with the unintended receiver’s channel to degrade its signal-to-interference-plus-noise ratio (SINR). Lastly, the holographic beamforming weights are optimized to enable holographic beamforming, further enhancing the overall performance in terms of system security.

Simulation results demonstrate the effectiveness of the proposed framework, showing significant improvements in the secrecy rate compared to the benchmark method. These findings establish the potential of RHS-assisted systems, combined with advanced optimization frameworks like MM, as transformative technology for achieving secure and efficient wireless communications.

\emph{Paper Organization:} The remainder of this paper is organized as follows. Section \ref{section_2} presents the system model and formulates the optimization problem for secrecy rate maximization of a hybrid RHS-assisted system. Section \ref{section_3} develops the MM-based framework and presents a novel algorithmic design to jointly solve the problem. Finally, Section \ref{sec_4} presents the simulation results, and Section \ref{sec_5} concludes the paper.

\emph{Notations:} In this paper, we adopt a consistent set of notations: Scalars are denoted by lowercase or uppercase letters, while vectors and matrices are represented by bold lowercase and bold uppercase letters, respectively. The transpose, Hermitian transpose, and inverse of a matrix $\mathbf{X}$ are denoted by $\mathbf{X}^\mathrm{T}$, $\mathbf{X}^\mathrm{H}$, and $\mathbf{X}^{-1}$, respectively. Sets are indicated by calligraphic letters (e.g., $\mathcal{X}$), and their cardinality is represented by $|\mathcal{X}|$. Finally, $|\cdot|$ denotes the $l_2$-norm.

\section{System Model} \label{section_2}

\begin{figure}
    \centering
    \includegraphics[width=\linewidth]{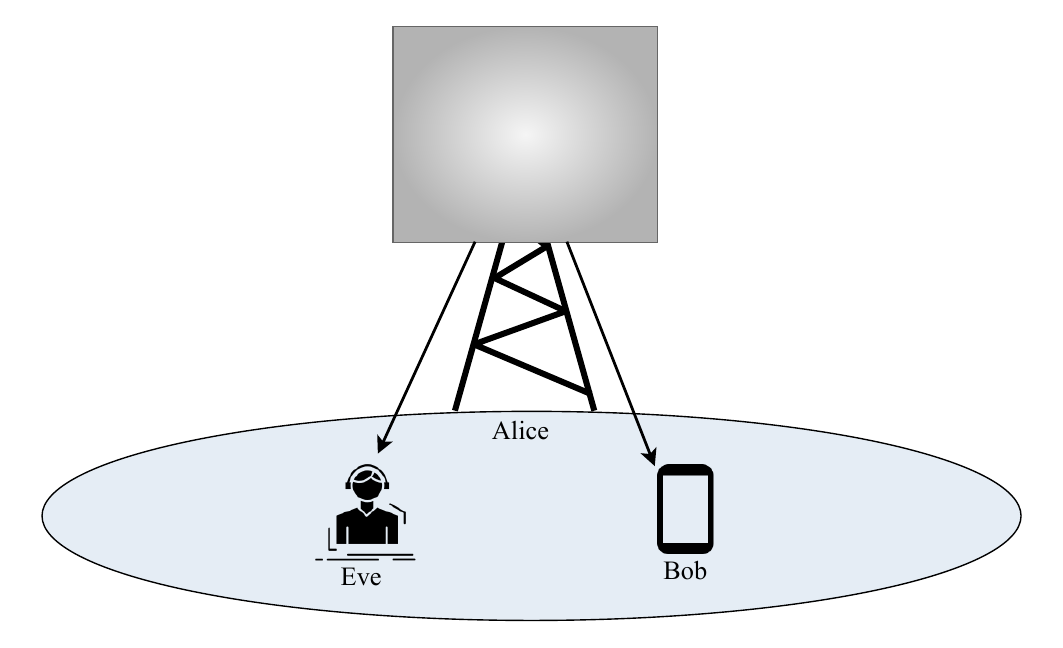}
    \caption{RHS-assisted secure holographic communications.}
    \label{fig1}
\end{figure}
 
\subsection{Scenario Description}
We consider a downlink communication system that consists of one base station (BS) (referred to as Alice), one legitimate user (referred to as Bob), and one intended recipient (referred to as Eve). The BS employs a hybrid beamforming architecture, which consists of two components: a high-dimensional holographic beamformer in the analog domain and a lower-dimensional digital beamforming. The BS is equipped with $R$ RF chains and generates the digitally beamformed signal using the digital beamforming vector $\mathbf{v} \in \mathbb{C}^{R \times 1}$, optimized for Bob.
We assume that each RF chain feeds its corresponding RHS feed, transforming the high-frequency electrical signal into a propagating electromagnetic wave. The digitally processed signal is further multiplied by the analog holographic beamforming matrix $\mathbf{W} \in \mathbb{C}^{M \times R}$, implemented by the RHS. The RHS consists of $M$ reconfigurable elements arranged in a rectangular grid with $\sqrt{M}$ elements along each axis. Each element of the holographic beamforming matrix $\mathbf{W}$ is represented as $\{w_m \cdot e^{-j \bmk_s \cdot \bmr_m}\}$\cite{deng2021reconfigurable},
where $w_m$ denotes the holographic weight of the radiation signal at the $m$-th element to enable holographic beamforming, and $e^{-j \bmk_s \cdot \bmr_m}$ represents the phase shift determined by the reference wave vector $\bmk_s$, and $\bmr_m$ denotes the position of the $m$-th element.

Let $s \sim \mathcal{CN}(\mathbf{0}, 1)$ denote the transmitted data stream to the intended user Bob and let $\mathbf{z} \in \mathbb{C}^{R \times 1}$ denote the artificial noise (AN) vector designed to degrade the reception at Eve. The artificial noise is modelled as $\mathbf{z} \sim \mathcal{CN}(0, \sigma_z^2 \mathbf{I}_R)$. Given the aforementioned notations, the received signal at Bob can be written as
\begin{equation}
y_b = \mathbf{h}_b^H \mathbf{W} \left( \mathbf{v} s + \mathbf{z} \right) + n_b,
\end{equation}
where $\mathbf{h}_b \in \mathbb{C}^{M \times 1}$ is the channel vector between the Alice and Bob, and $n_b \sim \mathcal{CN}(0, \sigma_b^2)$ represents the additive white Gaussian noise (AWGN) at Bob. Similarly, the received signal at Eve can be expressed as
\begin{equation}
y_e = \mathbf{g}_e^H \mathbf{W} \left( \mathbf{v} s + \mathbf{z} \right) + n_e,
\end{equation}
where $\mathbf{h}_e \in \mathbb{C}^{M \times 1}$ is the channel vector between the Alice and Eve, and $n_e \sim \mathcal{CN}(0, \sigma^2)$ represents the AWGN at Eve.

\subsection{Problem Formulation}

 The objective of this work is to maximize the secrecy rate of the Bob, while ensuring that the information leakage to Eve is minimized. The secrecy rate is defined as the difference between the achievable rate at Bob and the rate at which Eve can intercept the information intended for Bob. Mathematically, the secrecy rate is expressed as
\begin{equation}
S = \left[ R_b - R_e \right]^+,
\end{equation}
where $[x]^+ = \max(x, 0)$, $R_b$ is the achievable rate at Bob, and $R_e$ is the achievable rate at Eve. The achievable rate at Bob is given as
\begin{equation}
R_b = \log_2 \left( 1 + \text{SINR}_b \right),
\end{equation}
where the SINR at Bob, denoted as $\text{SINR}_b$, is defined as
\begin{equation}
\text{SINR}_b = \frac{ |\mathbf{h}_b^H \mathbf{W} \mathbf{v}|^2}{| \mathbf{h}_b^H \mathbf{W} \mathbf{z} |^2 + \sigma_b^2}.
\end{equation}
Similarly, the achievable rate at Eve is expressed as
\begin{equation}
R_e = \log_2 \left( 1 + \text{SINR}_e \right),
\end{equation}
where the SINR at Eve is defined as
\begin{equation}
\text{SINR}_e = \frac{\left| \mathbf{g}_e^H \mathbf{W} \mathbf{v} \right|^2}{| \mathbf{g}_e^H \mathbf{W} \mathbf{z} |^2 + \sigma_e^2}.
\end{equation}
The secrecy rate maximization problem is then formulated as
\begin{subequations}
    \begin{equation}
        \max_{\mathbf{v}, \mathbf{W}, \mathbf{z}}   \Big[ \log_2 \left( 1 + \text{SINR}_b \right) - \log_2 \left( 1 + \text{SINR}_e \right) \Big]^+,   
    \end{equation}
    \begin{equation}
       \text{s.t.}\quad \text{Tr}\left(\mathbf{W} \mathbf{v} \mathbf{v}^H \mathbf{W}^H \right) + \text{Tr}\left(\mathbf{z} \mathbf{z}^H\right)  \leq P_t,
    \end{equation}
    \begin{equation}
       \quad 0 \leq w_m \leq 1, \quad \forall m \in \{1, \dots, M\},
    \end{equation}
\end{subequations}
where $P_t$ is Alice's total transmit power budget. Constraint (7b) ensures the total power of the transmitted signal and AN does not exceed the power budget, while constraint (7c) enforces the amplitude of the RHS elements to lie within the range $[0, 1]$ \cite{deng2021reconfigurable}.  

This optimization problem is inherently non-convex, presenting significant challenges for direct optimization. The objective function is a difference of logarithmic terms, each involving the SINR at the Bob and the Eve. These SINR expressions are non-linear and coupled with respect to the optimization variables: the digital beamforming vector $\mathbf{v}$, the holographic beamforming matrix $\mathbf{W}$, and the AN vector $\mathbf{z}$. This coupling introduces complex interdependencies, making the problem non-convex and difficult to solve directly. Moreover, the interplay between the variables introduces a trade-off between enhancing the SINR at the Bob and degrading it at the Eve.

\section{MM-Based Solution} \label{section_3}
To address the challenges posed by the non-convex nature of the secrecy rate maximization problem for holographic beamforming with RHS, we propose an alternating optimization approach that decomposes the joint optimization problem into manageable subproblems. Each subproblem is then solved iteratively, ensuring steady improvement toward the overall objective. Specifically, we develop a novel MM-based framework to achieve this task \cite{7547360}. It constructs surrogate functions that approximate the non-convex objective with simpler convex functions, allowing efficient and reliable optimization at each step. By iteratively updating each variable, the proposed method effectively navigates the complex optimization landscape, achieving robust and secure communication in an RHS-assisted system.

 \subsection{Digital Beamforming Optimization}

To solve the problem, we first consider optimizing the digital beamforming matrix $\mathbf{v}$, assuming that the holographic beamforming matrix $\mathbf{W}$ and the AN vector $\mathbf{z}$ are fixed. The subproblem with respect to the $\bmv$ can be stated as

\begin{subequations}
    \begin{equation}
\max_{\mathbf{v}}  \Big[ \log_2\left( 1 + \text{SINR}_b \right) - \log_2\left( 1 + \text{SINR}_{e} \right) \Big]
\end{equation}
\begin{equation}
 \text{s.t.}\quad \text{Tr}\left(\mathbf{W} \mathbf{v} \mathbf{v}^H \mathbf{W}^H \right) \leq P_t - \text{Tr}\left(\mathbf{z} \mathbf{z}^H\right).
\end{equation}
\end{subequations}
This problem is non-convex due to the coupling of the digital beamformer $\bmv$ in the numerator and denominator of the SINR at Bob and Eve. To solve it, we first reformulate it to feed it to the MM framework.

To solve the secrecy rate maximization problem, the problem needs to be reformulated with surrogate functions that simplify the inherently non-convex optimization. By employing the MM-based framework, we majorize each logarithmic term in the objective function with a carefully constructed convex surrogate function. These surrogate functions are designed to closely approximate the behaviour of the original objective while being computationally tractable. 

To proceed, we first derive an important result to build the surrogate functions for secrecy rate maximization in an RHS-assisted system.
\begin{Theorem}
     Let $ f(\mathbf{v}) = \log_2\left(1 + \frac{\mathbf{v}^H \mathbf{W}^H \mathbf{H} \mathbf{W} \mathbf{v}}{b}\right) $, where $\mathbf{W} \in \mathbb{C}^{M \times F}$ is the holographic beamformer, $\mathbf{H} \succeq 0$ (positive semidefinite) represents the channel, and $b > 0$ is a constant. Then $f(\mathbf{v})$ can be majorized as
\begin{equation}
f(\mathbf{v})\hspace{-1mm} \leq f(\mathbf{v}^{(t)}) + \frac{\mathbf{v}^{(t)H} \mathbf{W}^H \mathbf{H} \mathbf{W} \mathbf{v}^{(t)}}{\ln(2) \left(b + \mathbf{v}^{(t)H} \mathbf{W}^H \mathbf{H} \mathbf{W} \mathbf{v}^{(t)}\right)} \mathbf{v}^H \mathbf{W}^H \mathbf{H} \mathbf{W} \mathbf{v},
\end{equation}
where $\mathbf{v}^{(t)}$ is the value of $\mathbf{v}$ at the $t$-th iteration.
\end{Theorem}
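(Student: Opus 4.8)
The plan is to reduce the matrix-valued majorization to a one-dimensional fact about the scalar concave map $\phi(u) = \log_2(1 + u/b)$, and then lift the resulting tangent inequality back to the vector $\mathbf{v}$. First I would set $u(\mathbf{v}) = \mathbf{v}^H \mathbf{W}^H \mathbf{H} \mathbf{W} \mathbf{v}$ and note that, because $\mathbf{H} \succeq 0$ implies $\mathbf{W}^H \mathbf{H} \mathbf{W} \succeq 0$, we have $u(\mathbf{v}) \ge 0$ for all $\mathbf{v}$; together with $b > 0$ this keeps the argument of the logarithm in the region where $\phi$ is smooth. With this substitution $f(\mathbf{v}) = \phi(u(\mathbf{v}))$, and the claim reduces to majorizing $\phi$ around the point $u_0 := \mathbf{v}^{(t)H} \mathbf{W}^H \mathbf{H} \mathbf{W} \mathbf{v}^{(t)}$.

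The analytic core is the concavity of $\phi$ on $[0, \infty)$. A direct computation gives $\phi''(u) = -\tfrac{1}{\ln 2}\,\tfrac{1/b^2}{(1 + u/b)^2} < 0$, so the graph of $\phi$ lies below each of its tangent lines. The first-order bound at $u_0$ then reads $\phi(u) \le \phi(u_0) + \phi'(u_0)\,(u - u_0)$, with slope $\phi'(u_0) = \tfrac{1}{\ln 2\,(b + u_0)}$. Substituting $u = u(\mathbf{v})$ and $u_0 = \mathbf{v}^{(t)H} \mathbf{W}^H \mathbf{H} \mathbf{W} \mathbf{v}^{(t)}$ reproduces a surrogate in which the positive semidefinite quadratic form $\mathbf{v}^H \mathbf{W}^H \mathbf{H} \mathbf{W} \mathbf{v}$ appears linearly, hence convexly in $\mathbf{v}$, which is precisely the tractable structure the subsequent MM step requires.

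The step I would treat most carefully is the bookkeeping of the linear coefficient together with its accompanying constant, since this is where the stated surrogate must be matched and where errors most easily creep in. Passing from $\phi$ to $f$ through the rescaling $x = u/b$ changes the apparent slope, and the tangent line also produces a term proportional to $u_0$ that should be absorbed into the constant so that equality holds at $\mathbf{v} = \mathbf{v}^{(t)}$; I would verify this tightness explicitly, since tangency at the current iterate (equal function values and equal gradients) is exactly what certifies a valid MM majorizer rather than a loose inequality, and it is also the property on which monotonic convergence of the alternating scheme later rests. Because the only substantive input is the global tangent property of a scalar concave function, no optimization over $\mathbf{v}$ is involved, and once the coefficient and constant are pinned down the remaining work is routine substitution.
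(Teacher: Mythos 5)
Your plan --- reduce everything to the scalar concave map $\phi(u)=\log_2(1+u/b)$ with $u(\mathbf{v})=\mathbf{v}^H\mathbf{W}^H\mathbf{H}\mathbf{W}\mathbf{v}\ge 0$ and apply the global tangent-line bound --- is exactly the route the paper takes, and the slope you compute, $\phi'(u_0)=\tfrac{1}{\ln 2\,(b+u_0)}$ with $u_0=\mathbf{v}^{(t)H}\mathbf{W}^H\mathbf{H}\mathbf{W}\mathbf{v}^{(t)}$, is correct. The gap is that you defer the one step that actually decides the matter: matching your tangent bound to the surrogate stated in the theorem. That matching cannot be carried out. Your (correct) bound reads
\[
f(\mathbf{v}) \;\le\; f(\mathbf{v}^{(t)}) + \frac{u(\mathbf{v})-u_0}{\ln 2\,\left(b+u_0\right)},
\]
whereas the theorem asserts the linear coefficient $\tfrac{u_0}{\ln 2\,(b+u_0)}$ --- an extra multiplicative factor of $u_0$ --- and omits the $-u_0$ offset entirely. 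The stated inequality is not a loosened version of yours; it is false in general: take $b=1$ and an iterate with $u_0=0$ (or $u_0\to 0$), so the right-hand side tends to $f(\mathbf{v}^{(t)})=0$ while the left-hand side $\log_2\!\left(1+u(\mathbf{v})\right)$ is strictly positive whenever $u(\mathbf{v})>0$. Dropping the offset also destroys the tangency $f(\mathbf{v}^{(t)})=g(\mathbf{v}^{(t)}\mid\mathbf{v}^{(t)})$ that you yourself identify as the certificate of a valid majorizer; had you performed the verification you promise, you would have caught this.

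For completeness, the paper's own proof commits precisely the slip you warned about. After the correct expansion it asserts that ``the last term cancels out the constant,'' which is false --- $\tfrac{u_0/b}{1+u_0/b}\neq\ln(1+u_0/b)$; it is merely a nonpositive term that may be discarded at the price of losing tangency --- and in the final ``simplification'' the fixed iterate $u_0$ and the variable $u(\mathbf{v})$ are interchanged in the numerator of the coefficient, producing the spurious factor of $u_0$. So your instinct about where ``errors most easily creep in'' was exactly right, but the proof is incomplete until you either carry out that bookkeeping and state the corrected surrogate $f(\mathbf{v}^{(t)})+\tfrac{u(\mathbf{v})-u_0}{\ln 2\,(b+u_0)}$, or explain why the theorem's version would hold (it does not).
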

\begin{proof}
    The proof is provided in Appendix 1.
\end{proof}

Using this result, the achievable rate terms for Bob and Eve can be approximated based on the following propositions.

\begin{Proposition}
    The rate for Bob \(\log_2(1 + \text{SINR}_b)\) can be majorized as
\begin{equation}
\log_2\left(1 + \frac{\mathbf{v}^H \mathbf{W}^H \mathbf{h}_b \mathbf{h}_b^H \mathbf{W} \mathbf{v}}{|\mathbf{h}_b^H \mathbf{W} \mathbf{z}|^2 + \sigma_b^2}\right) \leq \text{c}_b^{(t)} + \frac{\mathbf{v}^H \mathbf{M}_b^{(t)} \mathbf{v}}{\ln(2)},
\end{equation}
where \(\text{c}_b^{(t)}\) and \(\mathbf{M}_b^{(t)}\) are given as
\begin{equation}
\text{c}_b^{(t)} = \log_2\left(1 + \frac{\mathbf{v}^{(t)H} \mathbf{W}^H \mathbf{h}_b \mathbf{h}_b^H \mathbf{W} \mathbf{v}^{(t)}}{|\mathbf{h}_b^H \mathbf{W} \mathbf{z}|^2 + \sigma_b^2}\right),
\end{equation}
\begin{equation}
\mathbf{M}_b^{(t)} = \frac{\mathbf{W}^H \mathbf{h}_b \mathbf{h}_b^H \mathbf{W}}{\left(|\mathbf{h}_b^H \mathbf{W} \mathbf{z}|^2 + \sigma_b^2\right) \left(1 + \frac{\mathbf{v}^{(t)H} \mathbf{W}^H \mathbf{h}_b \mathbf{h}_b^H \mathbf{W} \mathbf{v}^{(t)}}{|\mathbf{h}_b^H \mathbf{W} \mathbf{z}|^2 + \sigma_b^2}\right)}.
\end{equation}
\end{Proposition}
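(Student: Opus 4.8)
The plan is to obtain Proposition 1 as a direct specialization of Theorem 1, since $\log_2(1+\text{SINR}_b)$ already has exactly the structural form treated there. Writing $|\mathbf{h}_b^H \mathbf{W} \mathbf{v}|^2 = \mathbf{v}^H \mathbf{W}^H \mathbf{h}_b \mathbf{h}_b^H \mathbf{W} \mathbf{v}$, I would identify the channel matrix as $\mathbf{H} = \mathbf{h}_b \mathbf{h}_b^H$ and the constant as $b = |\mathbf{h}_b^H \mathbf{W} \mathbf{z}|^2 + \sigma_b^2$. The first step is to verify that the hypotheses of Theorem 1 are met: $\mathbf{h}_b \mathbf{h}_b^H$ is a rank-one Gram matrix, hence $\mathbf{H} \succeq 0$, and, because $\mathbf{W}$ and $\mathbf{z}$ are held fixed while only $\mathbf{v}$ is optimized, $b = |\mathbf{h}_b^H \mathbf{W} \mathbf{z}|^2 + \sigma_b^2 > 0$ is a genuine positive constant. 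With these identifications, the generic function $f(\mathbf{v})$ of Theorem 1 coincides with $\log_2(1+\text{SINR}_b)$.

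Next I would substitute these quantities into the majorization inequality of Theorem 1 and read off the two objects appearing in the statement. The reference-point value $f(\mathbf{v}^{(t)})$ becomes $\log_2\!\big(1 + \mathbf{v}^{(t)H}\mathbf{W}^H\mathbf{h}_b\mathbf{h}_b^H\mathbf{W}\mathbf{v}^{(t)}/b\big)$, which is exactly $\text{c}_b^{(t)}$. The quadratic term inherits the coefficient from Theorem 1, and collecting it into the form $\mathbf{v}^H \mathbf{M}_b^{(t)} \mathbf{v}/\ln 2$ produces a matrix proportional to $\mathbf{W}^H \mathbf{h}_b \mathbf{h}_b^H \mathbf{W}$. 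The one manipulation worth doing carefully is the denominator: using $1 + \text{SINR}_b^{(t)} = (b + \mathbf{v}^{(t)H}\mathbf{W}^H\mathbf{h}_b\mathbf{h}_b^H\mathbf{W}\mathbf{v}^{(t)})/b$, the factor $b + \mathbf{v}^{(t)H}\mathbf{W}^H\mathbf{h}_b\mathbf{h}_b^H\mathbf{W}\mathbf{v}^{(t)}$ arising from Theorem 1 can be rewritten as $b\,(1 + \text{SINR}_b^{(t)})$, which is precisely the compact form in which $\mathbf{M}_b^{(t)}$ is expressed.

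Finally I would record the structural property that makes the surrogate useful downstream: since $\mathbf{h}_b \mathbf{h}_b^H \succeq 0$ implies $\mathbf{W}^H \mathbf{h}_b \mathbf{h}_b^H \mathbf{W} \succeq 0$, and the scalar prefactor is positive, the matrix $\mathbf{M}_b^{(t)}$ is positive semidefinite; hence the right-hand side is a convex quadratic upper bound in $\mathbf{v}$, as needed for the ensuing convex subproblem. I expect the argument to be short, so the only real obstacle is bookkeeping, namely keeping the rank-one channel $\mathbf{h}_b \mathbf{h}_b^H$, the fixed interference-plus-noise constant $b$, and the scalar versus matrix prefactors aligned when translating the generic statement of Theorem 1 into the SINR-specific quantities $\text{c}_b^{(t)}$ and $\mathbf{M}_b^{(t)}$, and confirming the denominator identity $b + \mathbf{v}^{(t)H}\mathbf{W}^H\mathbf{h}_b\mathbf{h}_b^H\mathbf{W}\mathbf{v}^{(t)} = b\,(1+\text{SINR}_b^{(t)})$.
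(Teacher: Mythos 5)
Your proposal is correct and takes exactly the paper's route: the paper's entire proof of Proposition~1 is the one-line remark that it follows from Theorem~1, i.e.\ the specialization $\mathbf{H}=\mathbf{h}_b\mathbf{h}_b^H$, $b=|\mathbf{h}_b^H\mathbf{W}\mathbf{z}|^2+\sigma_b^2$ that you spell out, including the denominator identity $b+\mathbf{v}^{(t)H}\mathbf{W}^H\mathbf{h}_b\mathbf{h}_b^H\mathbf{W}\mathbf{v}^{(t)}=b\,(1+\mathrm{SINR}_b^{(t)})$. One caveat: the coefficient printed in Theorem~1 carries an extra numerator factor $\mathbf{v}^{(t)H}\mathbf{W}^H\mathbf{H}\mathbf{W}\mathbf{v}^{(t)}$ (an algebra slip in the final step of the appendix), whereas $\mathbf{M}_b^{(t)}$ in the Proposition matches the correct first-order Taylor bound $\ln(1+x)\le\ln(1+x^{(t)})+\frac{x-x^{(t)}}{1+x^{(t)}}$ without that factor, so your ``read off the coefficient'' step implicitly uses the corrected form rather than the literal statement of Theorem~1.
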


\begin{proof}
    The proof follows directly from Theorem $1$.
\end{proof}
\begin{Proposition}
   The rate for Eve \(\log_2(1 + \text{SINR}_e)\) can be majorized as
\begin{equation}
\log_2\left(1 + \frac{\mathbf{v}^H \mathbf{W}^H \mathbf{g}_e \mathbf{g}_e^H \mathbf{W} \mathbf{v}}{|\mathbf{g}_e^H \mathbf{W} \mathbf{z}|^2 + \sigma_e^2}\right) \leq \text{c}_e^{(t)} + \frac{\mathbf{v}^H \mathbf{M}_e^{(t)} \mathbf{v}}{\ln(2)},
\end{equation}
where \(\text{c}_e^{(t)}\) and \(\mathbf{M}_e^{(t)}\) are given as
\begin{equation}
\text{c}_e^{(t)} = \log_2\left(1 + \frac{\mathbf{v}^{(t)H} \mathbf{W}^H \mathbf{g}_e \mathbf{g}_e^H \mathbf{W} \mathbf{v}^{(t)}}{|\mathbf{g}_e^H \mathbf{W} \mathbf{z}|^2 + \sigma_e^2}\right),
\end{equation}
\begin{equation}
\mathbf{M}_e^{(t)} = \frac{\mathbf{W}^H \mathbf{g}_e \mathbf{g}_e^H \mathbf{W}}{\left(|\mathbf{g}_e^H \mathbf{W} \mathbf{z}|^2 + \sigma_e^2\right) \left(1 + \frac{\mathbf{v}^{(t)H} \mathbf{W}^H \mathbf{g}_e \mathbf{g}_e^H \mathbf{W} \mathbf{v}^{(t)}}{|\mathbf{g}_e^H \mathbf{W} \mathbf{z}|^2 + \sigma_e^2}\right)}.
\end{equation}
\end{Proposition}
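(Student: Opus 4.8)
The plan is to derive Proposition 2 as an immediate specialization of Theorem 1, in exactly the same way Proposition 1 was obtained, exploiting that $\text{SINR}_e$ has the identical algebraic form to $\text{SINR}_b$ with Eve's channel and noise replacing Bob's. First I would write the numerator of the SINR as a Hermitian quadratic form in $\mathbf{v}$, namely $|\mathbf{g}_e^H \mathbf{W} \mathbf{v}|^2 = \mathbf{v}^H \mathbf{W}^H \mathbf{g}_e \mathbf{g}_e^H \mathbf{W} \mathbf{v}$, so that the rate at Eve becomes $\log_2\!\big(1 + \mathbf{v}^H \mathbf{W}^H \mathbf{g}_e \mathbf{g}_e^H \mathbf{W} \mathbf{v} / b\big)$ with the shorthand $b := |\mathbf{g}_e^H \mathbf{W} \mathbf{z}|^2 + \sigma_e^2$. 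This matches the template $f(\mathbf{v}) = \log_2(1 + \mathbf{v}^H \mathbf{W}^H \mathbf{H} \mathbf{W} \mathbf{v}/b)$ of Theorem 1 under the identification $\mathbf{H} = \mathbf{g}_e \mathbf{g}_e^H$.

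Next I would verify that the hypotheses of Theorem 1 are met under this identification. The matrix $\mathbf{H} = \mathbf{g}_e \mathbf{g}_e^H$ is a rank-one outer product and is therefore positive semidefinite, so $\mathbf{H} \succeq 0$; moreover $b = |\mathbf{g}_e^H \mathbf{W} \mathbf{z}|^2 + \sigma_e^2 \geq \sigma_e^2 > 0$. Crucially, in this subproblem $\mathbf{W}$ and $\mathbf{z}$ are held fixed while $\mathbf{v}$ is optimized, so $b$ is a legitimate strictly positive constant rather than a function of the optimization variable, which is precisely what the ``$b > 0$ constant'' assumption of Theorem 1 requires. With both conditions checked, Theorem 1 applies verbatim.

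Finally I would read off the two surrogate ingredients from the theorem evaluated at the current iterate $\mathbf{v}^{(t)}$: the additive constant is the value of the rate at $\mathbf{v}^{(t)}$, which gives $\text{c}_e^{(t)} = \log_2\!\big(1 + \mathbf{v}^{(t)H} \mathbf{W}^H \mathbf{g}_e \mathbf{g}_e^H \mathbf{W} \mathbf{v}^{(t)}/b\big)$, and the coefficient multiplying the quadratic form $\mathbf{v}^H \mathbf{W}^H \mathbf{g}_e \mathbf{g}_e^H \mathbf{W} \mathbf{v}$ in the bound supplies, after factoring out $1/\ln(2)$, the matrix $\mathbf{M}_e^{(t)} = \mathbf{W}^H \mathbf{g}_e \mathbf{g}_e^H \mathbf{W} / \big(b\,(1 + \mathbf{v}^{(t)H}\mathbf{W}^H \mathbf{g}_e \mathbf{g}_e^H \mathbf{W}\mathbf{v}^{(t)}/b)\big)$; expanding the shorthand $b$ reproduces the stated expressions, and packaging them as $\text{c}_e^{(t)} + \mathbf{v}^H \mathbf{M}_e^{(t)} \mathbf{v}/\ln(2)$ yields the claimed majorization. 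Since the argument is a pure substitution into an already-established result, I do not anticipate a genuine obstacle; the only points demanding care are bookkeeping ones — keeping the fixed denominator $b$ consistent across $\text{c}_e^{(t)}$ and $\mathbf{M}_e^{(t)}$, attaching the $1/\ln(2)$ factor to the quadratic term rather than absorbing it into $\mathbf{M}_e^{(t)}$, and recording the one-line observation that it is exactly the rank-one, positive-semidefinite structure of $\mathbf{g}_e \mathbf{g}_e^H$ that licenses the appeal to Theorem 1.
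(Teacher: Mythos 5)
Your proposal is correct and takes essentially the same approach as the paper, whose entire proof is the assertion that the result follows directly from Theorem~1 under the identification $\mathbf{H} = \mathbf{g}_e \mathbf{g}_e^H$ and $b = |\mathbf{g}_e^H \mathbf{W} \mathbf{z}|^2 + \sigma_e^2$; your explicit checks that $\mathbf{g}_e\mathbf{g}_e^H \succeq 0$ and $b \geq \sigma_e^2 > 0$ (with $\mathbf{W}$, $\mathbf{z}$ held fixed) are exactly the bookkeeping the paper leaves implicit. One minor observation: the coefficient you read off, $1/\bigl(\ln(2)\,(b + \mathbf{v}^{(t)H}\mathbf{W}^H\mathbf{H}\mathbf{W}\mathbf{v}^{(t)})\bigr)$, is the one consistent with both the underlying first-order Taylor bound and the Proposition as stated, whereas the coefficient printed in Theorem~1 carries a spurious extra factor of $\mathbf{v}^{(t)H}\mathbf{W}^H\mathbf{H}\mathbf{W}\mathbf{v}^{(t)}$ --- an internal inconsistency of the paper rather than a gap in your argument.
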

\begin{proof}
    The proof follows directly from Theorem $1$.
\end{proof}
Using the surrogate functions derived above, we can restate the original optimization problem for digital beamforming as
\begin{subequations} \label{restated_SR}
    \begin{equation}
\max_{\mathbf{v}} \frac{\mathbf{v}^H \bmQ_v \mathbf{v}}{\ln(2)}
\end{equation}
\begin{equation}
 \text{s.t.}\quad \text{Tr}\left(\mathbf{W} \mathbf{v} \mathbf{v}^H \mathbf{W}^H \right) \leq P_t - \text{Tr}\left(\mathbf{z} \mathbf{z}^H\right),
\end{equation}
\end{subequations}
where $\bmQ_v = (\mathbf{M}_b^{(t)} - \mathbf{M}_e^{(t)})$. The reformulated problem utilizes surrogate functions to restate the original secrecy rate maximization task in a more tractable form. By replacing the non-convex logarithmic terms with convex surrogates, the problem becomes a quadratic optimization. The modular structure allows for the optimization over \(\mathbf{v}\) while treating \(\mathbf{W}\) and \(\mathbf{z}\) as fixed parameters in each iteration.

\begin{Theorem}
The optimal unconstrained digital beamformer $\mathbf{v}$ for the secrecy rate maximization problem is obtained by solving the generalized eigenvalue problem
\begin{equation}
\mathbf{Q}_v^{(t)} \mathbf{v} = \lambda \mathbf{R} \mathbf{v},
\end{equation}
where $\mathbf{Q}_v^{(t)} = \mathbf{M}_b^{(t)} - \mathbf{M}_e^{(t)}$ is the effective secrecy rate optimization matrix, and $\mathbf{R} = \mathbf{W}^H \mathbf{W}$ is the effective power weight matrix. The eigenvector $\mathbf{u}_{\max}$ corresponding to the largest eigenvalue $\lambda_{\max}$ is the optimal $\mathbf{v}$:
\begin{equation}
\mathbf{v} = \mathbf{u}_{\max}(\mathbf{Q}_v^{(t)}, \mathbf{R}),
\end{equation}
where $\mathbf{u}_{\max}(\cdot)$ denotes the eigenvector associated with the largest eigenvalue.
\end{Theorem}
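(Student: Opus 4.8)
The plan is to recognize the reformulated objective in \eqref{restated_SR} as a generalized Rayleigh quotient and to invoke the Rayleigh--Ritz characterization for the matrix pencil $(\mathbf{Q}_v^{(t)}, \mathbf{R})$. First, I would rewrite the power constraint to expose the matrix $\mathbf{R}$: by the cyclic property of the trace, $\text{Tr}(\mathbf{W}\mathbf{v}\mathbf{v}^H\mathbf{W}^H) = \mathbf{v}^H\mathbf{W}^H\mathbf{W}\mathbf{v} = \mathbf{v}^H\mathbf{R}\mathbf{v}$, so the subproblem becomes $\max_{\mathbf{v}} \mathbf{v}^H\mathbf{Q}_v^{(t)}\mathbf{v}$ subject to $\mathbf{v}^H\mathbf{R}\mathbf{v} \le P$, where $P := P_t - \text{Tr}(\mathbf{z}\mathbf{z}^H)$ and the constant $1/\ln(2)$ is absorbed. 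Since both the objective and the constraint are homogeneous of degree two in $\mathbf{v}$, the optimal \emph{direction} does not depend on the power budget; this justifies the term ``unconstrained'' and allows the magnitude to be fixed afterward by scaling so that the constraint holds with equality.

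Next, I would form the Lagrangian $\mathcal{L}(\mathbf{v},\lambda) = \mathbf{v}^H\mathbf{Q}_v^{(t)}\mathbf{v} - \lambda(\mathbf{v}^H\mathbf{R}\mathbf{v} - P)$ and apply Wirtinger calculus, setting $\partial\mathcal{L}/\partial\mathbf{v}^H = \mathbf{Q}_v^{(t)}\mathbf{v} - \lambda\mathbf{R}\mathbf{v} = \mathbf{0}$. This stationarity condition is precisely the claimed generalized eigenvalue problem $\mathbf{Q}_v^{(t)}\mathbf{v} = \lambda\mathbf{R}\mathbf{v}$. Left-multiplying by $\mathbf{v}^H$ yields $\mathbf{v}^H\mathbf{Q}_v^{(t)}\mathbf{v} = \lambda\,\mathbf{v}^H\mathbf{R}\mathbf{v} = \lambda P$ at any stationary point, so maximizing the objective amounts to selecting the largest generalized eigenvalue $\lambda_{\max}$, with the optimal $\mathbf{v}$ being the corresponding generalized eigenvector $\mathbf{u}_{\max}(\mathbf{Q}_v^{(t)}, \mathbf{R})$.

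To make the argument self-contained, I would also supply the whitening reduction. Assuming $\mathbf{R} \succ 0$, the substitution $\mathbf{u} = \mathbf{R}^{1/2}\mathbf{v}$ converts the problem into the standard Rayleigh quotient $\max_{\|\mathbf{u}\|^2 \le P} \mathbf{u}^H \mathbf{R}^{-1/2}\mathbf{Q}_v^{(t)}\mathbf{R}^{-1/2}\mathbf{u}$, whose maximizer is the principal eigenvector of $\tilde{\mathbf{Q}} := \mathbf{R}^{-1/2}\mathbf{Q}_v^{(t)}\mathbf{R}^{-1/2}$. A short computation shows that undoing the transformation maps this principal eigenvector back to the generalized eigenvector of the pencil $(\mathbf{Q}_v^{(t)}, \mathbf{R})$, confirming that $\mathbf{v} = \mathbf{u}_{\max}(\mathbf{Q}_v^{(t)}, \mathbf{R})$ is optimal.

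The main obstacle I anticipate is justifying the positive definiteness of $\mathbf{R} = \mathbf{W}^H\mathbf{W}$ required by the whitening step. Since $\mathbf{W} \in \mathbb{C}^{M \times R}$, this holds if and only if $\mathbf{W}$ has full column rank, i.e. $M \ge R$ with linearly independent columns; in the RHS regime the element count $M$ far exceeds the number of RF chains $R$, so this is the generic situation, though the statement should be qualified accordingly (or a pseudo-inverse substituted when $\mathbf{R}$ is singular). A secondary subtlety is that $\mathbf{Q}_v^{(t)} = \mathbf{M}_b^{(t)} - \mathbf{M}_e^{(t)}$ is a difference of rank-one positive semidefinite matrices and is therefore indefinite; the result is meaningful only when $\lambda_{\max} > 0$, i.e. when a strictly positive secrecy direction exists, with the optimizer collapsing to $\mathbf{v} = \mathbf{0}$ otherwise.
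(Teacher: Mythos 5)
Your proposal is correct and follows essentially the same route as the paper: both identify the subproblem as a generalized Rayleigh quotient for the pencil $(\mathbf{Q}_v^{(t)}, \mathbf{R})$ and conclude that the optimal direction is the principal generalized eigenvector, the paper by citing the standard result in Golub and Van Loan and you by deriving the stationarity condition from the Lagrangian and the whitening substitution $\mathbf{u} = \mathbf{R}^{1/2}\mathbf{v}$. Your added caveats --- that $\mathbf{R} = \mathbf{W}^H\mathbf{W}$ must be nonsingular for the whitening step and that $\mathbf{Q}_v^{(t)}$, being a difference of rank-one positive semidefinite matrices, is indefinite so the result is only meaningful when $\lambda_{\max} > 0$ --- are legitimate points the paper's proof passes over silently.
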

\begin{proof}
To establish the theorem, we begin by analyzing the optimization problem for the digital beamforming vector $\mathbf{v}$. Specifically, we consider the reformulated objective function, given by

\begin{equation}
\max_{\mathbf{v}} \quad \frac{\mathbf{v}^H \mathbf{Q}_v^{(t)} \mathbf{v}}{\mathbf{v}^H \mathbf{R} \mathbf{v}},
\end{equation}

where $\mathbf{Q}_v^{(t)} = \mathbf{M}_a^{(t)} - \mathbf{M}_b^{(t)}$ is a Hermitian matrix (since it is constructed as the difference of two Hermitian matrices), and $\mathbf{R} = \mathbf{W}^H \mathbf{W}$ is a positive semi-definite matrix representing the power constraint.

To derive the optimal solution, we recognize that this problem falls into the class of generalized Rayleigh quotient maximization problems, which are well-known to be solved via the generalized eigenvalue decomposition (see Theorem 8.1.1  in \cite{golub1996cf}). Specifically, the standard result states that the optimal vector $\mathbf{v}$ is given by the eigenvector corresponding to the largest generalized eigenvalue of the matrix pair $(\mathbf{Q}_v^{(t)}, \mathbf{R})$, leading to the eigenvalue equation

\begin{equation}
\mathbf{Q}_v^{(t)} \mathbf{v} = \lambda \mathbf{R} \mathbf{v}.
\end{equation}

Since maximizing the quadratic form $\mathbf{v}^H \mathbf{Q}v^{(t)} \mathbf{v}$ under the constraint $\mathbf{v}^H \mathbf{R} \mathbf{v} = 1$, which ensures that we are selecting an optimal direction for $\mathbf{v}$ rather than letting its magnitude grow arbitrarily, is equivalent to selecting the eigenvector $\mathbf{u}{\max}$ corresponding to the largest eigenvalue $\lambda_{\max}$, we conclude that

\begin{equation}
\mathbf{v}^{*} = \mathbf{u}_{\max},
\end{equation}

where $\mathbf{v}^{*}$ is the optimal solution to the problem. This result aligns with classical results in quadratic optimization with quadratic constraints and generalized eigenvalue problems, thus the Theorem follows.  
\end{proof}

Next, we address the power constraint.
To enforce this, the beamforming vector is scaled by a factor $\beta$ such that the equality is satisfied with available power. The scaling factor is computed as
\begin{equation}
\beta = \min\left(1, \sqrt{\frac{P_{\text{av}}}{\text{Tr}\left(\mathbf{W} \mathbf{v} \mathbf{v}^H \mathbf{W}^H\right)}}\right),
\end{equation}
where $P_{\text{av}} = P_t - \|\mathbf{z}\|^2$ represents the remaining available power after allocating resources to the artificial noise, computed at the previous iteration. The scaled beamforming vector satisfying the remaining power constraint with equality can be obtained as 
\begin{equation} \label{ref_optimal_digital}
\mathbf{v} \gets \beta \mathbf{v}.
\end{equation}
This ensures that the power constraint is met with equality for the available to maximize the performance.

\subsection{AN Optimization}

In this section, we derive the optimal AN vector $\mathbf{z}$, which is critical for minimizing the SINR at Eve while maximizing the performance for Bob. The optimization problem with respect to $\mathbf{z}$ for secrecy rate maximization can be formulated as 
\begin{subequations}
\begin{equation}
\max_{\mathbf{z}} \left[ \log_2\left( 1 + \text{SINR}_b \right) - \log_2\left( 1 + \text{SINR}_{e} \right) \right],
\end{equation}
\begin{equation}
 \text{Tr}\left(\mathbf{z} \mathbf{z}^H\right)  \leq P_t -  \text{Tr}\left(\mathbf{W} \mathbf{v} \mathbf{v}^H \mathbf{W}^H \right),
\end{equation}
\end{subequations}



To solve this problem, we must ensure that the AN does not interfere with Bob. To this end, we first project $\mathbf{z}$ onto the null space of the effective channel of Bob. Note that the effective channel vector for Bob is given by
\begin{equation}
\mathbf{H}_{b} = \bmW^H \mathbf{h}_b.
\end{equation}
Let $\mathbf{N}_{b}$ denote the null space of $\mathbf{H}_{b}^H$, such that $\mathbf{H}_{b}^H \mathbf{N}_{b} = \mathbf{0}$. The optimal AN vector can them be computed as  
\begin{equation}
\mathbf{z} = \mathbf{N}_{b} \mathbf{z}_0,
\end{equation}
where $\mathbf{z}_0$ is an auxiliary vector to be optimized within the null space. This projection ensures that $\mathbf{z}$ lies in a subspace orthogonal to the effective channel of Alice, thereby avoiding interference with her signal.

The power allocated to digital beamforming and AN must satisfy the total power constraint
\begin{equation}
\text{Tr}\left( \bmW \mathbf{v} \mathbf{v}^H \bmW^H \right) + \text{Tr}\left( \mathbf{z} \mathbf{z}^H\right)  \leq P_t.
\end{equation}

To allocate power for the AN, the current power used by the digital beamforming vector $\mathbf{v}$ is first computed as
\begin{equation}
P_{\text{cur}} = \text{Tr}\left(\mathbf{v}^H \bmW^H \bmW \mathbf{v}\right),
\end{equation}
which dictates the residual power that can be used for AN, given as
\begin{equation}
P_{\text{av}} = P_t - P_{\text{cur}}.
\end{equation}

To degrade the SINR at Eve, the auxiliary vector $\mathbf{z}_0$ must be designed. Note that the covariance matrix of the interference caused by the AN at Eve is given by
\begin{equation}
\mathbf{G}_{e} = \bmW^H \mathbf{g}_e \mathbf{g}_e^H \bmW,
\end{equation}
where $\mathbf{g}_e$ is the channel vector from the RHS to Eve. The vector $\mathbf{z}_0$ can be aligned with the principal eigenvector $\mathbf{u}_{\max}$ of $\mathbf{G}_{e}$, corresponding to its largest eigenvalue
\begin{equation}
\mathbf{z}_0 = \sqrt{P_{\text{av}}} \frac{\mathbf{u}_{\max}}{|\mathbf{u}_{\max}|}.
\end{equation}
and the optimal noise is given as
\begin{equation} \label{noise_optimal}
\mathbf{z} = \mathbf{N}_{b} \mathbf{z}_0.
\end{equation}
This approach guarantees that the total power allocated to digital beamforming and AN respects the inequality constraint while effectively degrading Bob's SINR.  

\subsection{Holographic Beamforming}

Note that based on the considered holographic architecture, we can decompose the holographic beamformer into two parts as $ \bmW = \text{diag}([w_1, w_2, \dots, w_M]) \mathbf{\Phi} $ \cite{
zhang2022holographic}, where the diagonal elements $ w_m $ represent the adjustable weight of the holographic beamformer for the reconfigurable element $0\leq m \leq M$, and $ \mathbf{\Phi} $ contains fixed phase components, which are determined by the reference wave propagating within the waveguide, i.e., with elements of the form $e^{-j \bmk_s \cdot \bmr_{m}^k}$. Each $ w_m $ is constrained to lie in the range $ [0, 1] $. Based on such observation, we can formulate the optimization problem with respect to the holographic beamformer as

 \begin{subequations} \label{holo_problem}
    \begin{equation}
    \begin{aligned}
        \max_{\mathbf{w}}   & \left[ \log_2\left(1 + \frac{| \mathbf{h}_a^H \text{diag}(\mathbf{w}) \mathbf{\Phi} \mathbf{v} |^2}{| \mathbf{h}_a^H \text{diag}(\mathbf{w}) \mathbf{\Phi} \mathbf{z} |^2 + \sigma_a^2}\right) \right. \\
        & \quad \left. - \log_2\left(1 + \frac{| \mathbf{g}_b^H \text{diag}(\mathbf{w}) \mathbf{\Phi} \mathbf{v} |^2}{| \mathbf{g}_b^H \text{diag}(\mathbf{w}) \mathbf{\Phi} \mathbf{z} |^2 + \sigma_b^2}\right) \right],
    \end{aligned}
    \end{equation}
    \begin{equation}
    \textbf{s.t:} \quad
    0 \leq w_m \leq 1, \quad \forall m \in \{1, \dots, M\}.
    \end{equation}
\end{subequations}
In this section, we address this optimization problem by leveraging the MM-based framework developed above. Recall that such a framework simplifies the optimization by replacing the non-convex terms with surrogate functions. For a concave function $\log_2(1 + x)$, the following upper bound holds \cite{boyd2004convex}
\begin{equation}
\log_2(1 + x) \leq \log_2(1 + x^{(t)}) + \frac{x - x^{(t)}}{\ln(2)(1 + x^{(t)})},
\end{equation}
where $x^{(t)}$ is the value of $x$ at the current iteration.

Consider the numerator of Bob's SINR, and the power contribution is expressed as 
 $| \mathbf{h}_b^H \text{diag}(\mathbf{w}) \mathbf{\Phi} \mathbf{v} |^2$. Expanding this expression, the term \(\mathbf{h}_b^H \text{diag}(\mathbf{w}) \mathbf{\Phi} \mathbf{v}\) can be rewritten in matrix form as \(\mathbf{w}^H \text{diag}(\mathbf{\Phi} \mathbf{v}) \mathbf{h}_b\). Taking the norm squared of this term results in a quadratic form
\begin{equation}
| \mathbf{h}_b^H \text{diag}(\mathbf{w}) \mathbf{\Phi} \mathbf{v} |^2 = \mathbf{w}^H \left[\text{diag}(\mathbf{\Phi} \mathbf{v})^H \mathbf{h}_b \mathbf{h}_b^H \text{diag}(\mathbf{\Phi} \mathbf{v})\right] \mathbf{w}.
\end{equation}
Given this property, we define matrix \(\mathbf{M}_b = \text{diag}(\mathbf{\Phi} \mathbf{v})^H \mathbf{h}_b \mathbf{h}_b^H \text{diag}(\mathbf{\Phi} \mathbf{v})\), and the  numerator simplifies to 
\begin{equation}
| \mathbf{h}_b^H \text{diag}(\mathbf{w}) \mathbf{\Phi} \mathbf{v} |^2=\mathbf{w}^T \mathbf{M}_b \mathbf{w}.
\end{equation}

Similarly for Eve, by using the same properties, the numerator can be represented as
\begin{equation}
| \mathbf{g}_e^H \text{diag}(\mathbf{w}) \mathbf{\Phi} \mathbf{v} |^2 = \mathbf{w}^T \mathbf{M}_e \mathbf{w},
\end{equation}
where
\begin{equation}
\mathbf{M}_e = \text{diag}(\mathbf{\Phi} \mathbf{v})^H \mathbf{g}_e \mathbf{g}_e^H \text{diag}(\mathbf{\Phi} \mathbf{v}).
\end{equation}

By applying the same properties and expressing the denominator similarly, we can construct the surrogate functions for Bob and Eve to maximize the secrecy rate based on the result derived in Theorem 1, leading to the expressions in \eqref{surr_bob} and \eqref{surr_eve}, which are shown at the top of the next page.

\begin{figure*}
\begin{equation} \label{surr_bob}
\mathbf{M}_b^{(t)} = \frac{\mathbf{M}_b}{
\left(
\mathbf{w}^{T(t)} 
\left[
\text{diag}(\mathbf{\Phi} \mathbf{z})^H \mathbf{g}_b \mathbf{g}_b^H \text{diag}(\mathbf{\Phi} \mathbf{z})
\right] 
\mathbf{w}^{(t)} + \sigma_b^2
\right)
\left(
1 + 
\frac{\mathbf{w}^{T(t)} \mathbf{M}_b \mathbf{w}^{(t)}}{
\mathbf{w}^{T(t)} 
\left[
\text{diag}(\mathbf{\Phi} \mathbf{z})^H \mathbf{g}_b \mathbf{g}_b^H \text{diag}(\mathbf{\Phi} \mathbf{z})
\right] 
\mathbf{w}^{(t)} + \sigma_b^2
}
\right)
}.
\end{equation}
 \begin{equation}\label{surr_eve}
     \mathbf{M}_e^{(t)} = \frac{\mathbf{M}_e}{
\left(
\mathbf{w}^{T(t)} 
\left[
\text{diag}(\mathbf{\Phi} \mathbf{z})^H \mathbf{h}_a \mathbf{h}_a^H \text{diag}(\mathbf{\Phi} \mathbf{z})
\right] 
\mathbf{w}^{(t)} + \sigma_a^2
\right)
\left(
1 + 
\frac{\mathbf{w}^{T(t)} \mathbf{M}_e \mathbf{w}^{(t)}}{
\mathbf{w}^{T(t)} 
\left[
\text{diag}(\mathbf{\Phi} \mathbf{z})^H \mathbf{h}_a \mathbf{h}_a^H \text{diag}(\mathbf{\Phi} \mathbf{z})
\right] 
\mathbf{w}^{(t)} + \sigma_a^2
}
\right)
}. 
 \end{equation}
 \hrulefill
\end{figure*}

Given such expressions, let $\mathbf{Q}_w^{(t)}$ be defined as
\begin{equation}
\mathbf{Q}_w^{(t)} = \mathbf{M}_b^{(t)} - \mathbf{M}_e^{(t)}.
\end{equation}

Therefore, we can restate the optimization problem 
\eqref{holo_problem} as 
 \begin{subequations} \label{holo_problem_restated}
    \begin{equation}
    \begin{aligned}
        \max_{\mathbf{w}} \quad \mathbf{w}^T \bmQ_w \mathbf{w} ,
    \end{aligned}
    \end{equation}
    \begin{equation}
    \textbf{s.t:} \quad
    0 \leq w_m \leq 1, \quad \forall m \in \{1, \dots, M\}.
    \end{equation}
\end{subequations}

This reformulation transforms the original problem into an iterative optimization task over the quadratic matrix \(\mathbf{Q}^{(t)}\), enabling efficient updates of the holographic weights \(\mathbf{w}\).  To solve \eqref{holo_problem_restated}, we first define the objective function as 
\begin{equation}
f(\mathbf{w}) = \mathbf{w}^T \mathbf{Q}^{(t)} \mathbf{w}.
\end{equation}
and to find the optimal direction for the holographic weights, we take its derivative with respect to $\mathbf{w}$, which leads to the gradient 
\begin{equation}
\nabla_{\mathbf{w}} f(\mathbf{w}) = 2 \mathbf{Q}^{(t)} \mathbf{w}.
\end{equation}
This gradient provides the direction for updating $\mathbf{w}$ to maximize the objective function, indicating the direction in which $f(\mathbf{w})$ increases most rapidly. At each iteration $t$, the vector $\mathbf{w}$ is updated using the gradient ascent method with the following update rule
\begin{equation}
\mathbf{w}^{(t+1)} = \mathbf{w}^{(t)} + \eta \nabla_{\mathbf{w}} f(\mathbf{w}^{(t)}),
\end{equation}
where $\eta > 0$ is the learning rate, determining the step size along the direction of the gradient. To ensure feasibility, the updated vector $\mathbf{w}^{(t+1)}$ is projected onto the box-constrained domain
\begin{equation}
w_m^{(t+1)} = \min(1, \max(0, w_m^{(t+1)})), \quad \forall m \in \{1, \dots, M\}.
\end{equation}
This projection guarantees that the updated elements of $\mathbf{w}$ remain within the range $[0, 1]$, satisfying the physical constraints of the holographic beamformer. The optimize the holographic beamformer, the iterative steps are formally stated in Algorithm $1$.

 \begin{algorithm}[t]
\caption{MM-based holographic beamforming optimization}
\label{alg:gradient_beamforming}
\begin{algorithmic}[1]
\State \textbf{Input:} Initial holographic amplitudes vector $\mathbf{w}^{(0)} \in [0, 1]^M$, learning rate $\eta > 0$, convergence tolerance $\epsilon$, maximum iterations $T_{\max}$.
\State \textbf{Initialize:} Set iteration $t = 0$.
\Repeat
    \State Compute the gradient:
    \begin{equation}
    \nabla_{\mathbf{w}}^{(t)} = 2 \mathbf{Q}^{(t)} \mathbf{w}^{(t)}.
    \end{equation}
    \State Update $\mathbf{w}$ using the gradient ascent:
    \begin{equation}
    \mathbf{w}^{(t+1)} = \mathbf{w}^{(t)} + \eta \nabla_{\mathbf{w}}^{(t)}.
    \end{equation}
    \State Project $\mathbf{w}^{(t+1)}$ onto the feasible region:
    \begin{equation}
    w_m^{(t+1)} = \min(1, \max(0, w_m^{(t+1)})), \quad \forall m \in \{1, \dots, M\}.
    \end{equation}
    \State Set $t \gets t + 1$.
\Until{$\|\mathbf{w}^{(t)} - \mathbf{w}^{(t-1)}\|_2 < \epsilon$ or $t \geq T_{\max}$}
\State \textbf{Output:}  Holographic beamformer $\mathbf{W}^* =  \text{diag}(\textbf{w}^*) \Phi$.
\end{algorithmic}
\end{algorithm}

The joint optimization algorithm, which jointly optimizes the digital beamformer, the artificial noise vector and the analog holographic beamformer is given in Algorithm \ref{alg:joint_optimization}.

\begin{algorithm}[t]
\caption{MM-based secrecy rate maximization for an RHS-assisted system}
\label{alg:joint_optimization}
\begin{algorithmic}[1]
\State \textbf{Input:} Initial digital beamforming matrix $\mathbf{v}^{(0)}$, initial holographic amplitudes vector $\mathbf{W}^{(0)}$, initial artificial noise vector $\mathbf{z}^{(0)}$, convergence tolerance $\epsilon$, and maximum iterations $T_{\max}$.
\State \textbf{Initialize:} Set iteration $t = 0$.
\Repeat
    \State \textbf{Step 1: Digital Beamforming Optimization.}
    \State Compute the optimal unconstrained digital beamformer $\bmv$.
    \State Scale it as \eqref{ref_optimal_digital}.
    \State \textbf{Step 2: Artificial Noise Optimization.}
    \State Solve the artificial noise optimization problem for $\mathbf{z}$ with fixed $\mathbf{v}^{(t+1)}$ and $\mathbf{w}^{(t)}$.
    \State Update the artificial noise vector \eqref{noise_optimal}.
    \State \textbf{Step 3: Holographic Beamforming Optimization.}
    \State Solve the holographic beamforming problem for $\mathbf{W}$ by using Algorithm 1, with fixed $\mathbf{v}^{(t+1)}$ and $\mathbf{z}^{(t+1)}$.
    \State Update the holographic beamformer $\mathbf{W}^{(t+1)}$.
    \State Set $t \gets t + 1$.
    \Until{The objective function converges or $t \geq T_{\max}$}
\State \textbf{Output:}  $\mathbf{v}$, $\mathbf{W}$, and $\mathbf{z}$.
\end{algorithmic}
\end{algorithm}

 \subsection{Convergence Analysis}

The convergence of the proposed joint optimization algorithm is established through the theoretical principles of the MM-based framework and gradient ascent-based alternating optimization. The secrecy rate maximization problem is tackled by iteratively optimizing the digital beamforming vector $\mathbf{v}$, the holographic beamforming vector $\mathbf{w}$, and the artificial noise vector  $\mathbf{z}$. 

The MM framework constructs a surrogate function $ g(\mathbf{x} \mid \mathbf{x}^{(t)}) $ that approximates the original non-convex objective function $ f(\mathbf{x}) $ at each iteration $ t $. This surrogate function satisfies two critical properties:

\begin{enumerate}
    \item \emph{Majoration Property}:  
    \begin{equation}
        f(\mathbf{x}) \leq g(\mathbf{x} \mid \mathbf{x}^{(t)}), \quad \forall \mathbf{x}
    \end{equation}
    ensuring that $ g(\mathbf{x} \mid \mathbf{x}^{(t)}) $ is an upper bound on $ f(\mathbf{x}) $.  

    \item \emph{Tangency Property}:  
    \begin{equation}
        f(\mathbf{x}^{(t)}) = g(\mathbf{x}^{(t)} \mid \mathbf{x}^{(t)})
    \end{equation}
    ensuring that the surrogate function matches the original function at the current solution $\mathbf{x}^{(t)}$.  
\end{enumerate}

These properties guarantee that optimizing the surrogate function $ g(\mathbf{x} \mid \mathbf{x}^{(t)}) $ leads to a monotonic improvement in the objective function at each iteration:
\begin{equation}
    f(\mathbf{x}^{(t+1)}) \geq f(\mathbf{x}^{(t)}).
\end{equation}

The proposed optimization also follows a gradient ascent approach to optimize the holographic beamformer, while keeping the digital beamformer and the others fixed. The update rule follows:
\begin{equation} \label{grad_update}
    \mathbf{x}^{(t+1)} = \mathbf{x}^{(t)} + \eta \nabla_{\mathbf{x}} g(\mathbf{x} \mid \mathbf{x}^{(t)}),
\end{equation}
where $\eta > 0$ is the step size (learning rate). Note that gradient ascent provides an iterative approach that ensures steady improvement toward a locally optimal solution.

For each subproblem, the updates guarantee improvement due to the convexity of the surrogate function constructed via MM, combined with gradient ascent updates for the holographic beamformer. Specifically:
\begin{itemize}
    \item The {generalized eigenvalue solution} ensures that the optimal unconstrained $\mathbf{v}$ is obtained, followed by {power scaling} to enforce feasibility.
    \item The {null-space projection} of $\mathbf{z}$ onto Bob's effective channel ensures that it does not interfere with Bob while maximizing interference at Eve.
    \item The surrogate function for the holographic beamformer satisfies the majorization and the tangent property and the {gradient ascent} update for $\mathbf{w}$ ensures stepwise improvement while keeping the amplitude constraints satisfied through {projection onto the feasible set}.
\end{itemize}

Since the secrecy rate is {upper-bounded} due to the {finite power budget} $ P_t $, and the function is {monotonically increasing}, the sequence of objective function values converges to a {stationary point}. While {global optimality} is not guaranteed due to the non-convex nature of the problem, the {gradient ascent-based MM framework ensures convergence to a locally optimal solution} where no further improvement is possible.

 \subsection{Complexity Analysis}

The computational complexity of the proposed MM-based framework is analyzed by evaluating the three main components: digital beamforming, artificial noise design, and holographic beamforming.

The digital beamforming step involves solving a generalized eigenvalue problem for matrices of size \(R \times R\), where \(R\) is the number of RF chains. The complexity of eigenvalue decomposition is \(\mathcal{O}(R^3)\). Since \(R \ll M\) (the number of RHS elements), this step is computationally efficient. The artificial noise design requires two operations. First, the null space of the legitimate user's effective channel is computed using singular value decomposition (SVD), with a complexity of \(\mathcal{O}(M R^2)\) for a matrix of size \(M \times R\). Second, the alignment of the artificial noise vector with the principal eigenvector of Eve's interference covariance matrix requires eigenvalue decomposition, with a complexity of \(\mathcal{O}(R^3)\). The overall complexity of this step is dominated by the SVD, resulting in \(\mathcal{O}(M R^2)\). The holographic beamforming step uses an iterative gradient ascent method within the MM-based framework. Each iteration involves:
\begin{enumerate}
    \item Computing the gradient, with a complexity of \(\mathcal{O}(M^2)\).
    \item Projecting \(\mathbf{w}\) onto the box-constrained domain \([0, 1]\), with a complexity of \(\mathcal{O}(M)\).
\end{enumerate}
For \(T_h\) iterations, the total complexity of each update is \(\mathcal{O}(T_h M^2)\). The joint optimization algorithm iterates over these three steps for \(T\) outer iterations. The overall complexity is $
\mathcal{O}\left(T \cdot (R^3 + M R^2 + T_h M^2)\right)$. In typical scenarios where \(R \ll M\), the terms \(M R^2\) and \(T_h M^2\) dominate, making holographic beamforming the most computationally intensive. The total complexity simplifies to $\mathcal{O}\left(T \cdot T_h M^2\right)$,
indicating that the algorithm scales quadratically with the number of RHS elements.

\section{Simulation Results} \label{sec_4}  
In this section, we present simulation results to evaluate the performance of the proposed algorithm for secrecy rate maximization in RHS-assisted secure holographic communications. The noise power at Bob and Eve is fixed at \(-75\) dBm, and the results are evaluated for varying transmit power levels. The carrier frequency is set to \(30\) GHz, and the element spacing on the RHS is assumed to be \(\lambda/3\). The holographic beamformer employs iterative amplitude optimization with a convergence threshold of \(\epsilon = 10^{-5}\) and a learning rate of \(\eta = 0.01\). Given the free-space propagation vector \(\mathbf{k}_f\) and the RHS propagation vector \(\mathbf{k}_s\), the relationship between them is governed by the relative permittivity of the RHS substrate, \(\epsilon_r\), typically valued at 3, such that $
|\mathbf{k}_s| = \sqrt{\epsilon_r} |\mathbf{k}_f|$.
For this system, we adopt standard values commonly used in the literature: \( |\mathbf{k}_f| = 200\pi \) and \( |\mathbf{k}_s| = 200\sqrt{3}\pi \), as discussed in \cite{deng2022reconfigurable}. The RHS is assumed to be placed at the altitude of $50$ meters, with the surface aligned in the \((x, y)\)-plane. Bob's position is fixed directly in front of the RHS at a distance of \(100\) meters, while Eve's position is randomly distributed within a circle of radius \(5\) meters centered around Bob.
\begin{figure*}
    \centering
 \begin{minipage}{0.48\textwidth}
     \centering
   \includegraphics[width=\linewidth]{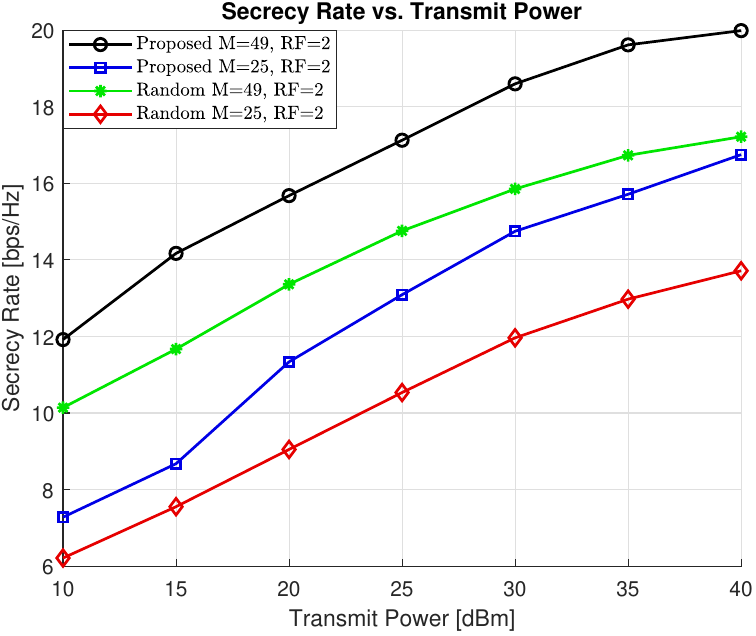}
   \caption{Secrecy rate as a function of the transmit power with varying RHS sizes and $2$ RF chains.}
   \label{fig2}
\end{minipage}  
      \begin{minipage}{0.48\textwidth}
       \centering
   \includegraphics[width=\linewidth]{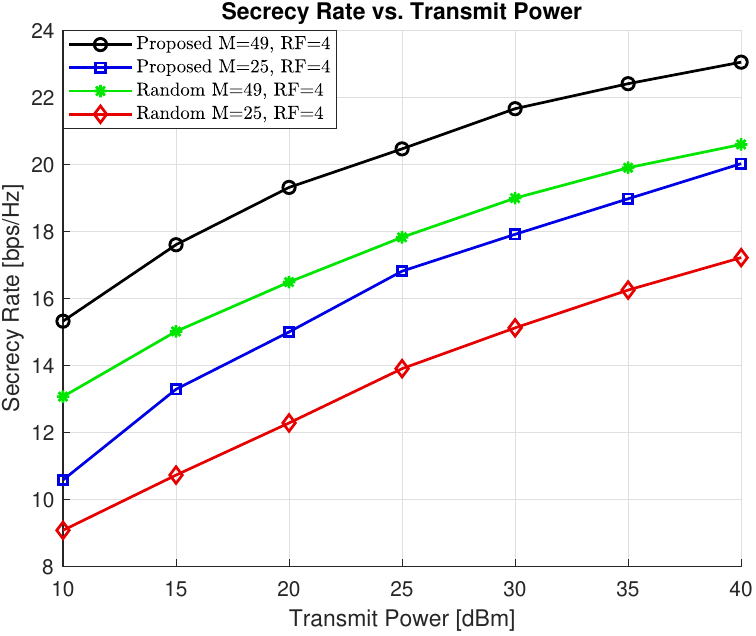}
   \caption{Secrecy rate as a function of the transmit power with varying RHS sizes and $4$ RF chains.}
    \label{fig3}
    \end{minipage}  
\end{figure*} 
We model the channels as a Rician fading given by
\begin{equation}
\mathbf{h} = \sqrt{\frac{K}{K+1}} \mathbf{h}_{\text{LOS}} + \sqrt{\frac{1}{K+1}} \mathbf{h}_{\text{NLOS}},
\end{equation}
where \(\mathbf{h}_{\text{LOS}}\) represents the deterministic LoS component, and \(\mathbf{h}_{\text{NLOS}}\) denotes the non-LoS component. The parameter \(K\) denotes the Rician factor. The LoS component for Bob and Eve is modelled with path loss exponents \(\beta_{b} = 2.2\) and \(\beta_{e} = 2.5\), respectively. The RHS response is be modelled as uniform planner arrays. Unless otherwise stated, the Rician factor is set to \(0\), corresponding to a purely Rayleigh fading scenario. Results with varying Rician factors will also be presented to evaluate performance in LoS-dominant scenarios. The number of RF chains is set to either \(2\) or \(4\), unless otherwise specified.

\begin{figure*}
    \centering
 \begin{minipage}{0.48\textwidth}
     \centering
   \includegraphics[width=\linewidth]{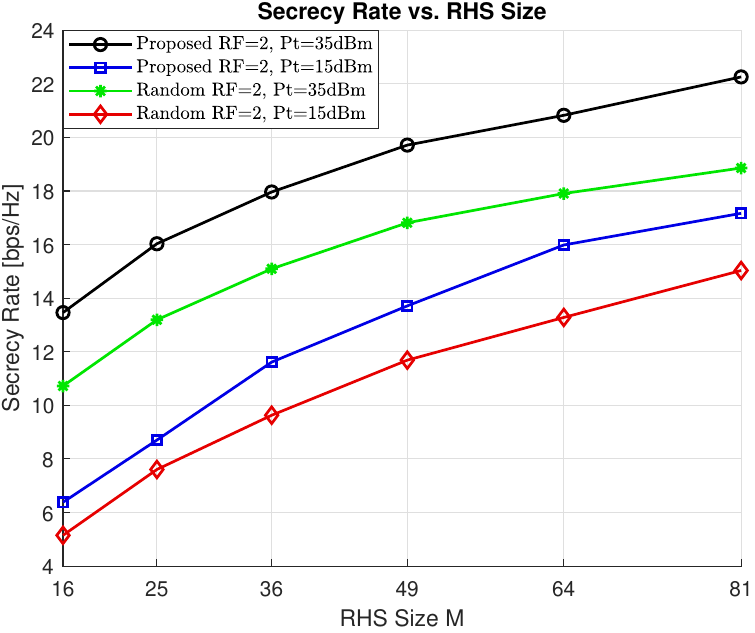}
   \caption{Secrecy rate as a function of the RHS size with varying transmit power and $2$ RF chains.}
   \label{fig4}
\end{minipage}  
      \begin{minipage}{0.48\textwidth}
       \centering
   \includegraphics[width=\linewidth]{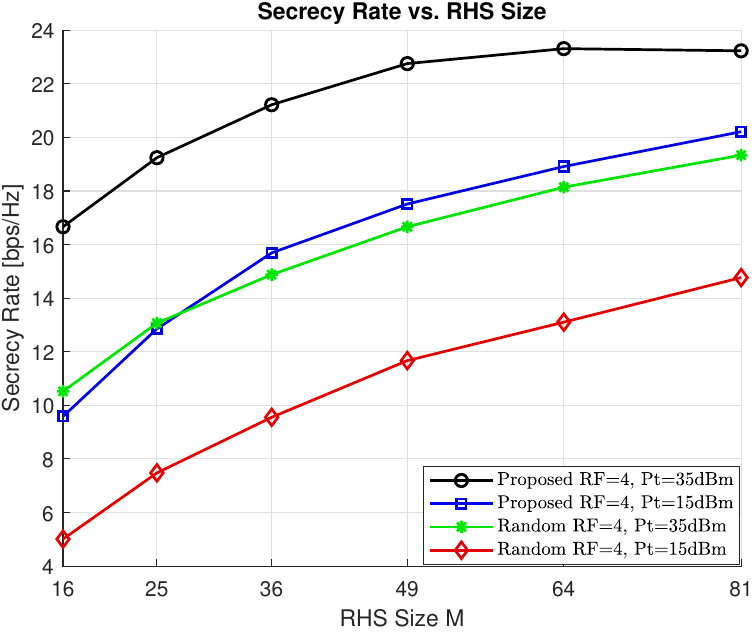}
   \caption{Secrecy rate as a function of the RHS size with varying transmit power and $4$ RF chains.}
    \label{fig5}
    \end{minipage}  
\end{figure*} 
To assess the performance of the proposed method, a benchmark scheme is established. In this scheme, the digital beamformers and AN are assumed to be optimized according to the MM-based framework, while the holographic weights are randomly selected within the range \([0,1]\). Throughout this work, we refer to this scheme as \emph{Random} and to the full MM-based framework developed herein as \emph{Proposed}.

Figure \ref{fig2} illustrates the variation in secrecy rate as a function of the transmit power in dBm, with the number of RF chains fixed at \(R = 2\). Key trends are observed when comparing the proposed and benchmark methods, for RHS sizes \(M = 25\) and \(M = 49\). At low transmit power, the secrecy rate gap between the proposed and random methods is less pronounced for both RHS sizes, as the limited power restricts the overall impact of beamforming optimization. However, even in this region, the proposed method achieves slightly higher secrecy rates due to its more efficient design in suppressing leakage for Eve. As the transmit power increases, the difference between the methods becomes more significant. For \(M = 25\), the gap between the proposed and benchmark methods begins to widen noticeably around $20$ dBm, where the proposed method demonstrates better exploitation of the available degrees of freedom to enhance the signal at Bob and reduce leakage to Eve. In contrast, the benchmark method fails to achieve a comparable improvement, leading to slower growth. For \(M = 49\), the advantage of the proposed method is even more pronounced, as the larger RHS provides additional degrees of freedom for precise beamforming, enabling higher secrecy rates compared to \(M = 25\). At higher power levels, the proposed method for \(M = 49\) outperforms all other configurations, while the benchmark method saturates at much lower secrecy rates. Figure \ref{fig3} illustrates the secrecy rate as a function of the transmit power in dBm, with the number of RF chains set to \(R = 4\). Comparing it with the previous case, note that at low transmit power levels (e.g., $10$–$20$ dBm), the secrecy rate increases modestly for both \(R = 4\) and \(R = 2\) cases, with a relatively small gap between the two configurations. This indicates that at lower power levels, the additional RF chains in \(R = 4\) provide only a marginal improvement in beamforming performance, as the overall system is limited by the low transmit power. Nevertheless, the proposed method consistently outperforms the benchmark scheme, even in this regime. As the transmit power increases beyond 20 dBm, the performance gap between \(R = 4\) and \(R = 2\) widens significantly. For \(R = 4\), the proposed method achieves higher secrecy rates due to the enhanced degrees of freedom provided by the additional RF chains, which improve signal focusing at Bob and signal leakage at Eve. This advantage becomes particularly evident at higher power levels, such as 30 dBm and beyond, where the \(R = 4\) configuration maintains a substantial lead over \(R = 2\). For the benchmark method, the secrecy rate increases much more slowly and saturates earlier for both RF configurations, highlighting the suboptimality of non-optimized beamforming.

\begin{figure*}
    \centering
 \begin{minipage}{0.48\textwidth}
     \centering
   \includegraphics[width=\linewidth]{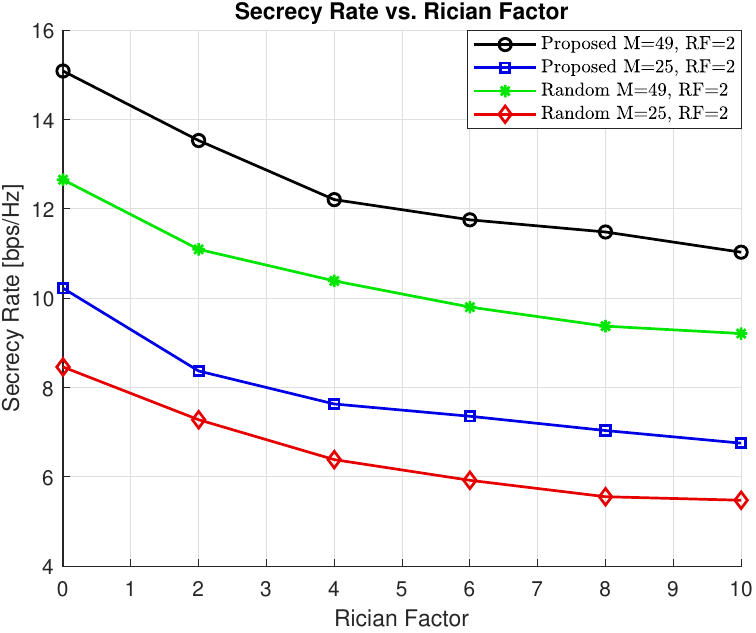}
   \caption{Secrecy rate as a function of the Rician factor with varying transmit RHS size and $2$ RF chains.}
   \label{fig6}
\end{minipage}  
      \begin{minipage}{0.48\textwidth}
       \centering
   \includegraphics[width=\linewidth]{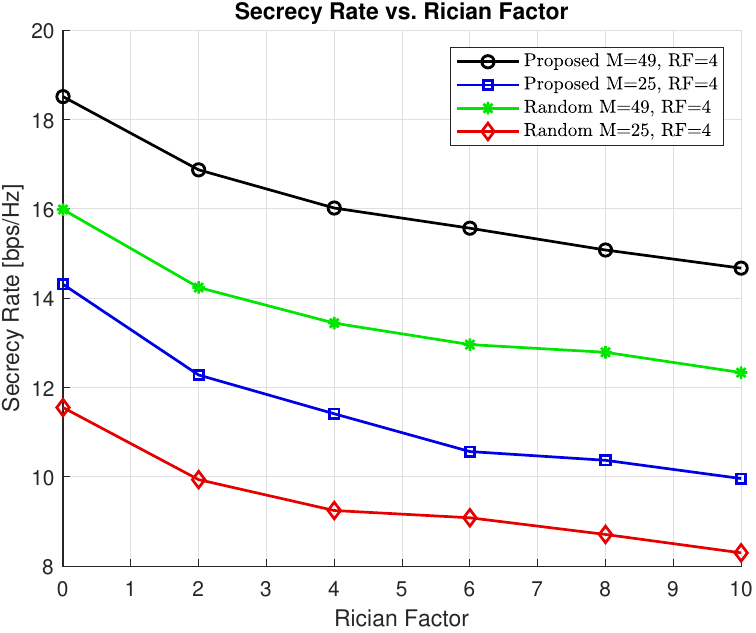}
   \caption{Secrecy rate as a function of the Rician factor with varying transmit RHS size and $4$ RF chains.}
    \label{fig7}
    \end{minipage}  
\end{figure*} 

Figure \ref{fig4} illustrates the variation of secrecy rate as a function  the of RHS size \(M\), with the number of RF chains fixed at \(R = 2\). We can see that the rate consistently increases as \(M\) increases, for both the proposed and benchmark schemes at different transmit power levels, as a larger RHS provides greater flexibility for holographic beamforming. We see that the proposed method outperforms the benchmark scheme across all values of \(M\) at different power levels, with the performance gap becoming more pronounced as \(M\) grows.  At smaller RHS sizes \(M = 16\), the difference between the proposed and benchmark methods at the same power level is less significant due to limited flexibility. However, as \(M\) increases, the proposed method capitalizes on the enhanced spatial capabilities to achieve higher secrecy rates. Despite this improvement, a saturation trend is observed at larger \(M\), where the rate of secrecy rate improvement diminishes. 

Figure \ref{fig5} shows the performance as a function of RHS size \(M\) with the number of RF chains set to \(R = 4\), offering a comparison to the previous case where \(R = 2\). Overall, the secrecy rate is significantly higher for \(R = 4\) across all values of \(M\), reflecting the enhanced spatial degrees of freedom provided by the additional RF chains, which enable more effective beamforming and interference suppression. The proposed method continues to outperform the benchmark scheme, and the performance gap becomes more pronounced for larger \(M\) compared to the \(R = 2\) case. This demonstrates the ability of the proposed method to better utilize the increased RF chains for joint optimization. At smaller \(M\), the secrecy rate differences between \(R = 4\) and \(R = 2\) are less pronounced, as the limited RHS size constrains the spatial resources available for optimization. However, as \(M\) increases, the secrecy rate growth is steeper for \(R = 4\), highlighting the compounding benefit of additional RF chains and larger RHS sizes. Despite this, a saturation trend is observed at higher values of \(M\) for \(R = 4\), though the saturation point occurs at a much higher secrecy rate compared to \(R = 2\).

Figure \ref{fig6} shows the variation of secrecy rate as a function of the Rician factor, with the number of RF chains fixed at $R=2$. The trends demonstrate how the presence of a stronger LoS component, represented by higher Rician factors, influences the system's secrecy performance.
The figure shows that the secrecy rate decreases consistently as the Rician factor increases, indicating that a stronger LoS component negatively impacts the system's ability to maintain secure communication. This trend suggests that the presence of a dominant LoS path reduces the effectiveness of leakage suppression at Eve, due to the reduced randomness and spatial diversity in the channel.  Figure \ref{fig7} shows the performance as a function of the Rician factor with the number of RF chains set to \(R = 4\). Compared to the previous case, the secrecy rate is consistently higher across all values of the Rician factor. This improvement reflects the additional spatial degrees of freedom provided by the increased RF chains, enabling more precise beamforming at Bob and enhanced leakage suppression at Eve. Similar to the \(R = 2\) case, the secrecy rate decreases as the Rician factor increases due to the reduced randomness and spatial diversity in the channel as the LoS component becomes more dominant. It is clear that at higher Rician factors, the secrecy rate for \(R = 4\) is greater than the case of \(R = 2\), highlighting the effectiveness of additional RF chains in maintaining secure communication in LoS-dominant scenarios. 
 \begin{figure}
     \centering
    \includegraphics[width=\linewidth]{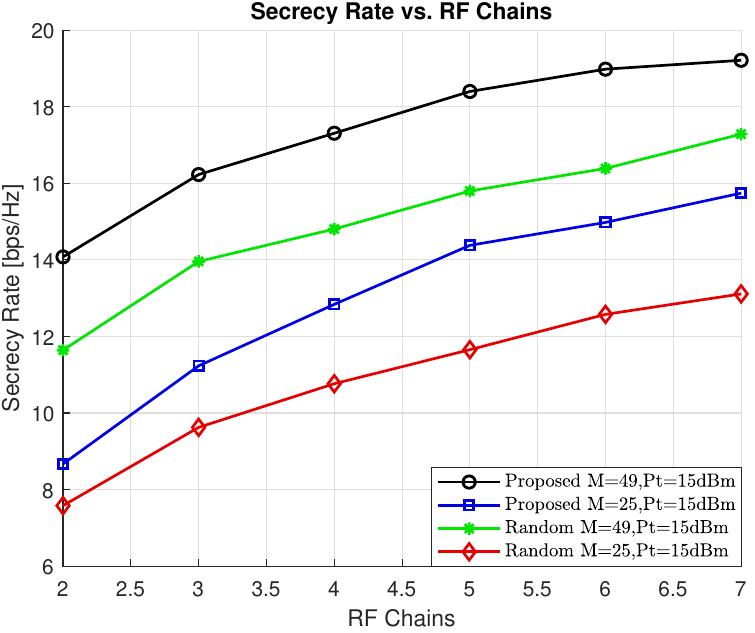}
    \caption{Secrecy rate as a function of the RF chains with varying transmit RHS size and transmit power $15$dBm.}
    \label{fig8}
\end{figure}

Finally, figure \ref{fig8} illustrates the performance as a function of the number of RF chains at a fixed transmit power of $15\sim$dBm for different RHS sizes. The curves reveal that larger RHS sizes consistently achieve higher secrecy rates across all RF chain configurations. This improvement stems from the increased flexibility provided by a larger RHS, which enhances passive beamforming and leakage suppression capabilities. Additionally, the performance gap between different RHS sizes becomes more pronounced as the number of RF chains increases, highlighting the synergistic effect of combining larger RHS sizes with active RF chains. However, a saturation trend is observed at higher RF chain configurations, where the secrecy rate improves at a slower rate, indicating diminishing returns from increasing RF chains.

From the same figure, comparing it with Figures \ref{fig6} and \ref{fig7}, we see that increasing the number of RF chains, has a similar improvement in terms of power as increasing the size of the RHS. While increasing RF chains provides direct performance gains through active beamforming, it also incurs higher hardware complexity, energy consumption, and implementation costs. In contrast, adding more RHS elements to enhance holographic capability occurs at minimal cost and greater energy efficiency, thus making it a highly scalable solution.

In the above figures, we learned how transmit power, RHS size, RF chains, and Rician factor impact the secrecy rate in RHS-enabled systems. At low transmit power, secrecy rate differences between the proposed and benchmark methods are small, but the proposed method consistently outperforms due to better leakage suppression. As power increases, the performance gap widens, particularly for larger RHS sizes, where additional spatial degrees of freedom enable more effective beamforming. 
Increasing the RHS size consistently enhances secrecy rates for both methods, with the proposed approach leveraging the additional flexibility more effectively. Similarly, adding RF chains improves performance by increasing beamforming flexibility, with the proposed method capitalizing on these enhancements to focus signals at legitimate receivers and suppress eavesdropping. However, from comparison, it is clear that enhancing the RHS size is far more desirable than increasing RF chains due to its significantly lower cost and energy requirements. Expanding the RHS size offers a scalable and cost-effective solution, improving holographic beamforming capabilities with minimal additional expense, while adding RF chains increases hardware complexity, energy consumption, and implementation costs. These factors make RHS enhancements a more practical and efficient pathway for achieving high secrecy rates in secure communication systems. We can also conclude that the Rician factor negatively impacts secrecy rates, as stronger LoS components reduce randomness and spatial diversity. Nonetheless, the proposed method mitigates this with advanced beamforming, maintaining higher rates even in LoS-dominant scenarios. 

\section{Conclusion} \label{sec_5}
In this paper, we proposed a novel optimization framework for maximizing the secrecy rate in RHS-assisted communication systems. By jointly optimizing digital beamforming, artificial noise generation, and analog holographic beamforming, the proposed method effectively enhances secure communication against eavesdropping. Leveraging the MM-based framework, the approach systematically addressed the non-convexity of the problem, ensuring reliable convergence to locally optimal solutions. Simulation results demonstrated significant improvements in secrecy rates compared to benchmark schemes, highlighting the advantages of the proposed method in exploiting the spatial degrees of freedom provided by larger RHS sizes and additional RF chains and its robustness at low transmission power and LoS dominant channels. Importantly, the study revealed that enhancing RHS size offers a more cost-effective and energy-efficient means of improving performance than increasing RF chains, making it a highly scalable solution for secure and efficient next-generation wireless networks.

\begin{appendices}
    \section{Proof of Theorem 1}
    \textbf{Proof:} Let $ f(\mathbf{v}) = \log_2\left(1 + \frac{\mathbf{v}^H \mathbf{W}^H \mathbf{H} \mathbf{W} \mathbf{v}}{b}\right) $, where $\mathbf{H} \succeq 0$ (positive semidefinite), $\mathbf{W} \in \mathbb{C}^{M \times F}$, and $b > 0$. To prove the theorem, we need to majorize $f(\mathbf{v})$ with an upper-bound function. Using the definition of $f(\mathbf{v})$, we can write it as
\begin{equation}
f(\mathbf{v}) = \frac{\ln\left(1 + \frac{\mathbf{v}^H \mathbf{W}^H \mathbf{H} \mathbf{W} \mathbf{v}}{b}\right)}{\ln(2)}.
\end{equation}

Let $x = \frac{\mathbf{v}^H \mathbf{W}^H \mathbf{H} \mathbf{W} \mathbf{v}}{b}.$ Thus, the function becomes $f(\mathbf{v}) = \frac{\ln(1 + x)}{\ln(2)}$. The logarithm function $\ln(1 + x)$ is concave. For concave functions, a first-order Taylor expansion provides a global upper bound. Let $x^{(t)}$ represent the value of $x$ at iteration $t$. The first-order Taylor expansion is given by:
\begin{equation}
\ln(1 + x) \leq \ln(1 + x^{(t)}) + \frac{x - x^{(t)}}{1 + x^{(t)}}.
\end{equation}

Substituting $x = \frac{\mathbf{v}^H \mathbf{W}^H \mathbf{H} \mathbf{W} \mathbf{v}}{b}$, the bound becomes:
 \begin{equation}
\begin{aligned}
    \ln\left(1 + \frac{\mathbf{v}^H \mathbf{W}^H \mathbf{H} \mathbf{W} \mathbf{v}}{b}\right) 
    &\leq \ln\left(1 + \frac{\mathbf{v}^{(t)H} \mathbf{W}^H \mathbf{H} \mathbf{W} \mathbf{v}^{(t)}}{b}\right) \\
    &\quad + \frac{\frac{\mathbf{v}^H \mathbf{W}^H \mathbf{H} \mathbf{W} \mathbf{v}}{b} - \frac{\mathbf{v}^{(t)H} \mathbf{W}^H \mathbf{H} \mathbf{W} \mathbf{v}^{(t)}}{b}}{1 + \frac{\mathbf{v}^{(t)H} \mathbf{W}^H \mathbf{H} \mathbf{W} \mathbf{v}^{(t)}}{b}}.
\end{aligned}
\end{equation}

By rewriting this, we get 
 \begin{equation}
\begin{aligned}
    \ln\left(1 + \frac{\mathbf{v}^H \mathbf{W}^H \mathbf{H} \mathbf{W} \mathbf{v}}{b}\right) 
    &\leq \ln\left(1 + \frac{\mathbf{v}^{(t)H} \mathbf{W}^H \mathbf{H} \mathbf{W} \mathbf{v}^{(t)}}{b}\right) \\
    &\quad + \frac{\frac{\mathbf{v}^H \mathbf{W}^H \mathbf{H} \mathbf{W} \mathbf{v}}{b}}{1 + \frac{\mathbf{v}^{(t)H} \mathbf{W}^H \mathbf{H} \mathbf{W} \mathbf{v}^{(t)}}{b}} \\
    &\quad - \frac{\frac{\mathbf{v}^{(t)H} \mathbf{W}^H \mathbf{H} \mathbf{W} \mathbf{v}^{(t)}}{b}}{1 + \frac{\mathbf{v}^{(t)H} \mathbf{W}^H \mathbf{H} \mathbf{W} \mathbf{v}^{(t)}}{b}}.
\end{aligned}
\end{equation}

The last term cancels out the constant.

The bound for $\ln(1 + x)$ is now scaled by $\frac{1}{\ln(2)}$ to revert to $f(\mathbf{v})$
\begin{equation}
f(\mathbf{v}) \leq f(\mathbf{v}^{(t)}) + \frac{\frac{\mathbf{v}^{(t)H} \mathbf{W}^H \mathbf{H} \mathbf{W} \mathbf{v}^{(t)}}{\ln(2) b}}{1 + \frac{\mathbf{v}^{(t)H} \mathbf{W}^H \mathbf{H} \mathbf{W} \mathbf{v}^{(t)}}{b}} \cdot \mathbf{v}^H \mathbf{W}^H \mathbf{H} \mathbf{W} \mathbf{v}.
\end{equation}

By simplifying the coefficient $\mathbf{v}^H \mathbf{W}^H \mathbf{H} \mathbf{W} \mathbf{v}$ yields
\begin{equation}
f(\mathbf{v}) \leq f(\mathbf{v}^{(t)}) + \frac{\mathbf{v}^{(t)H} \mathbf{W}^H \mathbf{H} \mathbf{W} \mathbf{v}^{(t)}}{\ln(2) \left(b + \mathbf{v}^{(t)H} \mathbf{W}^H \mathbf{H} \mathbf{W} \mathbf{v}^{(t)}\right)} \mathbf{v}^H \mathbf{W}^H \mathbf{H} \mathbf{W} \mathbf{v}.
\end{equation}

Thus, the theorem is proven. $\Box$

\end{appendices}

\ifCLASSOPTIONcaptionsoff
  \newpage
\fi

{\footnotesize
\bibliographystyle{IEEEtran}
\bibliography{main}}
  
\end{document}